\newcommand\blfootnote[1]{%
  \begin{NoHyper}%
  \renewcommand\thefootnote{}\footnote{#1}%
  \addtocounter{footnote}{-1}%
  \end{NoHyper}%
}
\newcommand{\amax}{A_{\underu}}
\newcommand{\astar}{A^*}
\newtheorem{cor}{Corollary}[section]
\DeclareRobustCommand{\qed}{%
  \ifmmode 
  \else \leavevmode\unskip\penalty9999 \hbox{}\nobreak\hfill
  \fi
  \quad\hbox{\qedsymbol}}
\newcommand{\openbox}{\leavevmode
  \hbox to.77778em{%
  \hfil\vrule
  \vbox to.675em{\hrule width.6em\vfil\hrule}%
  \vrule\hfil}}
\newcommand{\qedsymbol}{\openbox}
\newenvironment{proof}[1][\proofname]{\par
  \normalfont
  \topsep6\p@\@plus6\p@ \trivlist
  \item[\hskip\labelsep\itshape
    #1.]\ignorespaces
}{%
  \qed\endtrivlist
}
\newcommand{\abs}[1]{\lvert #1 \rvert}
\def\O{{O}}
\newcommand\undermat[2]{%
  \makebox[0pt][l]{$\smash{\underbrace{\phantom{%
    \begin{matrix}#2\end{matrix}}}_{\text{$#1$}}}$}#2}
\def\mt{{\tilde{m}}}
\def\nt{{\tilde{n}}}
\def\kt{{\tilde{k}}}
\def\Vt{{\mathcal{V}}}
\def\Et{{\mathcal{E}}}
\def\Gt{{\mathcal{G}}}
\def\St{{\mathcal{S}}}
\def\pmin{p_{\text{min}}}
\def\vs{{\mathbf v}}
\newcommand{\opt}{\textsc{OPT}}
\newcommand{\prob}{\text{Pr}}
\begin{document}
\title{Simple Delegated Choice}
\author{Ali Khodabakhsh \inst{1}
\and Emmanouil Pountourakis\inst{2}
\and Samuel Taggart \inst{3}}

\institute{University of Texas at Austin   \\
\email{ali.kh@utexas.edu} \\
\and Drexel University \\\email{manolis@u.drexel.edu} \\
\and Oberlin College   \\\email{staggart@oberlin.edu}
}
\date{}
\maketitle
\begin{abstract}
This paper studies delegation in a model of discrete choice. In the delegation problem, an uninformed principal must consult an informed agent to make a decision. Both the agent and principal have preferences over the decided-upon action which vary based on the state of the world, and which may not be aligned. The principal may commit to a mechanism, which maps reports of the agent to actions. When this mechanism is deterministic, it can take the form of a menu of actions, from which the agent simply chooses upon observing the state. In this case, the principal is said to have delegated the choice of action to the agent.\\

We consider a setting where the decision being delegated is a choice of a utility-maximizing action from a set of several options. We assume the shared portion of the agent's and principal's utilities is drawn from a distribution known to the principal, and that utility misalignment takes the form of a known bias for or against each action. We provide tight approximation analyses for simple threshold policies under three increasingly general sets of assumptions. With independently-distributed utilities, we prove a $3$-approximation. When the agent has an outside option the principal cannot rule out, the constant-approximation fails, but we prove a $\log \rho/\log\log \rho$-approximation, where $\rho$ is the ratio of the maximum value to the optimal utility. We also give a weaker but tight bound that holds for correlated values, and complement our upper bounds with hardness results. One special case of our model is utility-based assortment optimization, for which our results are new.
\blfootnote{Emmanouil Pountourakis was partially supported by  NSF CCF 2218813. Samuel Taggart was partially  supported by NSF CCF 2218814.}
\end{abstract}

\section{Introduction}
\label{sec:intro}
This paper considers a model of delegated stochastic probing.
A decisionmaker (the {\em principal}) must pick one of $n$ actions, each of which yields randomly distributed reward.
Rather than observe rewards directly, the decisionmaker chooses a subset of the actions to allow an agent to consider.
The agent observes the realized rewards exactly, but may be biased towards certain actions and away from others.
The principal's goal is to select a set of actions that will maximize their expected reward from the agent's biased choice.
This template captures a range of economic and managerial dilemmas.
As examples, a firm might seek to replace a piece of expensive equipment, or a national health service must choose which treatment to provide to a patient who might display a range of symptoms. 
The equipment operators know their needs better than managers, and the health service relies on doctors to observe patients. 
In such arrangements, the agent and principal tend not to have preferences which are perfectly aligned: the firm must pay for new equipment (while the operator does not), and specialist doctors might peddle lucrative optional procedures. 

The algorithmic problem above can be couched as mechanism design.
In a revelation mechanism, the agent would observe the actions' rewards and report these to the mechanism, which would choose a possibly randomized action.
The \emph{taxation principle} states that every deterministic mechanism is equivalent to a menu: the principal selects the set of allowable actions, and the agent simply chooses their preferred action upon observing the rewards. Such mechanisms eliminate the need for communication between the agent and principal, and are therefore so common in practice that they are often taken for granted as a managerial tool. 
In economics, {\em delegation} refers exactly to this problem of menu design for a better-informed agent,  coined by \citet{holmstrom1978incentives}.

In the examples above, the alignment of the agent and principals' preferences is well-structured. 
The principal's main uncertainty in the choice problem is payoff-relevant for both parties: in replacing equipment, both the operator and firm want to purchase the right tool for the job. 
Meanwhile, misalignment of preferences is predictable – the firm will pay for the new purchase, and prices are likely known in advance. 
Under these conditions, a particularly salient family of mechanisms is {\em threshold} mechanisms, which restrict the agent to actions where the misalignment of preferences is not too great. 
For our firm and operator, this would take the form of a budget.

\paragraph{Our Contributions.} This work gives a model for delegated choice scenarios like those discussed above. In our model, the agent and principals' preferences for a particular action are captured by two quantities. First, each action $i$ has a shared {\em value} $v_i$, which is unknown to the principal (but distributed according to a known prior) but observable to the agent. Second, each action has a commonly-known and fixed {\em bias} $b_i$, which captures the amount the agents' utility differs from that of the principal. The agent may also have outside options which the principal cannot prohibit; we extend our model to capture this issue as well.

We study three increasingly general regimes, distinguished by the correlation or independence of the value distributions and the absence or presence of an outside option.
For each, we give computational hardness, then take a simple-versus-optimal perspective by completely characterizing the performance of threshold mechanisms. In more detail:
\begin{itemize}
	\item With independently distributed values and no outside option, we show that threshold mechanisms are a $3$-approximation to the optimal mechanism.\footnote{Our results also hold with an outside option if that action has a fixed value, which we make precise subsequently.} We show that this problem is NP-hard.
	\item With independently distributed values and an outside option, threshold mechanisms cannot obtain any nontrivial approximation in general. However, we show a parametrized $\log \rho/\log\log\rho$-approximation, where $\rho$ is the ratio of largest possible value to $\opt$. This problem generalizes the previous problem, and is thus also NP-hard.
	\item With correlation, we give a $\log \pmin^{-1}$ approximation, where $\pmin$ is the probability of the least likely value profile. We show this problem is NP-hard to approximate below a constant factor.
\end{itemize}
We match all three approximation analyses of thresholds with tight examples. A special case of our model is utility-based assortment optimization, a canonical model from revenue management (discussed in Section~\ref{sec:related}). All our results are new to that literature.

\paragraph{Roadmap}

In Section~\ref{sec:prelim}, we give our formal model. We then survey existing work on delegation in Section~\ref{sec:related}, and make specific comparisons to existing work on delegated search and assortment optimization. Section~\ref{sec:indep} contains our hardness result and constant-approximation under independence and lays the groundwork for our parametrized analysis with an outside option in Section~\ref{sec:outside}. Finally, we analyze delegation with correlated values in Section~\ref{sec:res_distortion}.

\section{Model}
\label{sec:model}
\label{sec:prelim}

We now give our model of delegated choice. The principal seeks to choose from a discrete set $\Omega$ of $n$ actions. The principal's utility for action $i$ is given by a random {\em value} $v_i\geq 0$, which the principal is unable to observe. To assist in selecting an action, the principal may consult an agent, who observes all actions' values, and may communicate with the principal after observation. We decompose the agent's utility for action $i$ into its value, shared with the principal, and an unshared {\em bias} term. That is, the agent's utility is given by $u_i=v_i+b_i$. Throughout the paper, we assume each bias $b_i$ is constant and known to the principal.

We assume the principal has the power to commit ex ante to a mechanism for communicating with the agent and selecting an action, and study deterministic mechanisms. By the taxation principle, it suffices to consider mechanisms described by menus over actions. The agent observes all actions' values and selects their utility-maximizing action from the menu --- which may differ from the principal's preferred action. Taking this perspective, we consider the algorithmic problem of selecting a menu $A$ to maximize the principal's expected utility when the agent selects their preferred action according to the observed values. We further assume the existence of an outside option for the agent, denoted action $0$, with value $v_0$ and bias $b_0$. We assume that regardless of the principal's choice of $A$, the agent may always select this action.


Formally, when presented with action set $A\subseteq \Omega$ and after observing the vector of values $\vs$, denote the agent's preferred choice by $g(A,\vs)$. That is, $g(A,\vs)= \text{argmax}_{a\in A\cup \{0\}} (v_i+b_i)$. The principal is faced with a set function optimization problem. We assume the principal has a prior distribution $F$ over the values $\vs$, and must select a menu $A$ for the agent which maximizes their own expected utility.\footnote{We assume the agent breaks ties in the principal's favor, then lexicographically.} That is, the principal solves:
\begin{equation*}
	\underset{A\subseteq \Omega}{\text{maximize}}\,\,\,f(A)\coloneqq 
	\int_{\vs} v_{g(A,\vs)}\,dF(\vs).
\end{equation*}

The model above captures applications such those described in Section~\ref{sec:intro}.
Note that we allow the agent's utility to be negative, and that the model is invariant to additive shifts in the agent's bias for every action.
 We will study a particularly simple set of mechanisms, namely {\em threshold mechanisms}. The threshold mechanism with bias $t$ is given by $A_t=\{i\,|\,b_i\leq t\}$. Note that since the number of threshold policies is at most the number of actions, the principal may compute an optimal threshold efficiently. We analyze the approximation ratio between the best threshold menu and the optimal menu overall.

\begin{example}
	The equipment purchase example described in the introduction can be formulated as follows.
    The firm (principal) needs to buy a piece of equipment, which will be used by a specialist (agent) with knowledge of the quality of different brands.
	Each brand $i$ has quality $q_i$, and price $p_i$.
	Qualities are unknown to the principal, and prices are known.
	We may write values as $v_i=q_i-p_i$ and biases $b_i=p_i$. 
	Note that the values are random, while biases are known, as required.
	A threshold policy restricts to actions with bias ---and hence price--- at most $t$.
\end{example}

\begin{example}
    The health services example from the introduction may be heavily stylized as follows.
    A national health service (principal) needs to select a treatment for a patient with the help of a doctor (agent) who has expertise and observes the patient's condition.
    Each potential treatment $i$ has cost $c_i$ (known to the doctor and the health service), and given the patient's condition an efficacy $e_i$ (known to the doctor but not the health service).
    The health service seeks to maximize the patient's health less costs, $u^P_i=e_i-c_i$.
    The doctor is paid a portion of the costs, and shares some concern for the patient's health.
    For some $\alpha,\beta>0$, we may therefore write $u^A_i=\alpha e_i+\beta c_i$.
    To cast this in our model, note that scaling agent utilities by $1/\alpha$ will not change their decision, so we may normalize $\alpha=1$.
    After normalization, we have $v_i=e_i-c_i$ and $b_i=(\beta+1)c_i$.
    As required, the value $v_i$ depends on $e_i$ and is hence unknown to the principal, and the bias $b_i$ depends only on $c_i$, and is hence known to the principal.
    Further note that a threshold set corresponds to a price cap, restricting the doctor away from the highest-cost procedures.
\end{example}

\section{Related Work}
\label{sec:related}


\paragraph{Simple Versus Optimal Mechanisms.} A primary contribution of computer science to the study of mechanism design is the use of approximation to explain the prevalence of simple mechanisms. For example, \citet{hartline2009simple} prove that the simple auctions often observed in practice can obtain a constant factor of the sometimes-complicated, rarely-used optimal mechanisms. \citet{hartline2013mechanism} surveys similar results for auctions. Recently, \citet{dutting2019simple} and \citet{castiglioni2021bayesian} make similar forays into contract theory, characterizing the power of simple linear contracts. Our work initiates the study of delegated choice through a similar lens.\\

\paragraph{Real-Valued Delegation.} Delegated decisionmaking is a canonical problem in microeconomic theory and managerial science. Much of the literature subsequent to \citet{holmstrom1978incentives} has focused on the special case where the state and action space are continuous and real-valued, and where the preferences of both the agent and principal are single-peaked, but differ by a known bias.
Notable examples include \citet{melumad1991communication}, \citet{martimort2006continuity}, \citet{alonso2008optimal}, and \citet{amador2010optimality}, who characterize the structure of optimal mechanisms under increasingly general variants of the single-peaked model. The main conclusions from these papers are necessary and sufficient conditions for the optimal delegation set to be an interval on the real line. 
Our work makes a similar known bias assumption, but in a model more amenable to algorithmic analysis.
We obtain similar conclusions: the principal can secure high utility by restricting the agent away from extreme actions.

Additional work on similar models includes \citet{kovavc2009stochastic}, who study the gap in performance between randomized and deterministic mechanisms, and \citet{ambrus2017delegation}, who study a principal who can add additional nonmonetary costs to incentivize more preferred decisions.  \citet{aghion1997formal} and \citet{szalay2005economics} consider models in which one or more of the principal and the agent may expend effort to observe a signal about the state. 
For multiple decisions, \citet{frankel2014aligned} considers maxmin robust delegation and \citet{kleiner23} studies Bayesian optimal mechanisms.
 With the exception of \citet{armstrong2010model} and followup works, though, the economics literature has focused on the real-valued model for decisions. Our work considers the mathematically incomparable but similarly common setting of discrete choice. In the latter setting, the structure of the problem renders exact characterization of optimal mechanisms difficult, and motivates the use of a simple-versus-optimal approach.\\



\paragraph{Delegated Search.} The model of delegated project choice from \citet{armstrong2010model} is perhaps closest to ours. The authors consider an agent who chooses between $n$ discrete actions. The principal is able to verify the utilities provided by the selected action, and restrict the agent's choice based on this information. Subsequent followups by \citet{kleinberg2018delegated}, \citet{bechtel2021delegated}, and \citet{bechtel2022delegated} note a strong connection between the \citet{armstrong2010model} model and well-studied online stochastic optimization problems. They upperbound the {\em delegation gap}: they show that even when the agent must pay a search cost to discover each action's utility, the principal can obtain utility within a constant factor of the first-best solution, where they solve the search problem themselves. More recently, \citet{braun2022delegated} give a version where the agent searches online, and make similar comparisons to first-best, and \citet{hajiaghayi23} study a multi-agent version of the model.

Our model differs from the delegated search literature in two notable ways. First is the absence of search. Our agent can perfectly observe the values of all actions. More significantly, our principal is unable to verify the utilities provided by the agent's selected action; they may only rule actions in or out completely. In our model, the first-best solution is $\mathbb E[\max_i v_i]$. The following example shows that no delegation set may approximate the first-best to a factor better than $n$. This contrasts with the constant-approximation results from the work cited above.

\begin{example}
	Consider an instance with $n$ independently-distributed actions. Action $i$ has a value $v_i$ which is $1-\epsilon$ with probability $1/n$ and $0$ otherwise. Each action $i$ has bias $b_i=i$. The first-best expected utility is constant, while in any delegation set, the agent will always pick the highest-indexed action, yielding expected utility $(1-\epsilon)/n$.\\
\end{example}

\paragraph{Stochastic Probing.} There is a by now extensive literature on stochastic probing beyond the economically-inspired settings of this paper and those discussed above. Rather than survey the literature, we offer a few key recent papers, and refer the reader to these for deeper references: \cite{chen2016combinatorial,goel2006asking,mehta2020hitting,segev2021efficient}
Despite similarity of motivation, we employ techniques that largely differ from this literature.\\

\paragraph{Assortment Optimization.} Our model captures special cases of the well-studied {\em assortment optimization} problem. In assortment optimization, a seller must decide which among a set of fixed-price items to offer. A variety of models are common for the buyer's purchase choice, including nested logit models \citep{davis2014assortment,li2015d} and Markov chain-based choice \citep{feldman2017revenue}, along with equivalent models based on random buyer utility \citep{berbeglia2016discrete,AFLS18,aouad2023exponomial}, which includes the especially prevalent multinomial logit (MNL) model as a special case. Our model subsumes assortment optimization with utility-based choice. To see this, consider $n$ items, where the buyer utility $w_i$ for each item $i$ is random, and the revenue $r_i$ for item $i$ is known to the seller. Taking $v_i=r_i+\epsilon w_i$ and $b_i= -(1+\epsilon)r_i$ for sufficiently small $\epsilon>0$ yields an equivalent delegation problem. Under this transformation, an outside option with $v_0=0$ corresponds to the no-buy option, and the option to buy elsewhere with positive utility can be captured with a randomized outside option.

Threshold mechanisms in our model correspond to {\em revenue-ordered assortments}, a well-studied class of solutions for assortment optimization.
A series of papers analyze the approximation ratio of revenue-ordered assortments under increasingly general models: 
\citet{talluri2004revenue} show that revenue-ordered assortments are optimal for several choice models including MNL; \citet{rusmevichientong2014assortment} analyze revenue-ordered assortments for mixtures of MNL models, and further prove NP-hardness of computing the optimal assortment; \citet{berbeglia2020assortment} give parametrized analyses under a general choice model.
Our approximation analyses for independently-distributed values with an random outside option (Sections~\ref{sec:outside}) apply to utility-based assortment optimization, and are new to this literature.
Our logarithmic approximation for correlated values (Section~\ref{sec:res_distortion}) resembles that of \citet{berbeglia2020assortment}; it is less finely parametrized, but extends to more general forms of delegation.

Other work on computational hardness or approximation in assortment optimization includes \citet{desir2020constrained}, who hardness of approximation under a knapsack-constrained version of the problem, and \citet{immorlica2018combinatorial}, who study a version where the buyer has combinatorial preferences over bundles.

\section{Threshold Delegation with Independent Values}
\label{sec:indep}
\newcommand{\imax}{i_{\text{max}}}
\newcommand{\bmax}{b_{\text{max}}}
\newcommand{\underu}{\underline u}

We now consider the simplest case of the model, where the principal's prior $F$ over values is a product distribution, and hence, actions' values are independent. We further assume that the outside option's value, $v_0$, is deterministic, which subsumes the no-outside-option case, as we could have $b_0=-\infty$. We present our approximation result first, and defer hardness to Section~\ref{sec:indephard}

\begin{theorem}\label{thm:indep}
Under independent values and a deterministic outside option, there always exists a threshold mechanism with expected utility that is a $3$-approximation to the optimal deterministic mechanism.
\end{theorem}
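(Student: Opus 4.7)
The plan is to prove the bound by exhibiting three specific threshold mechanisms whose expected utilities sum to at least $f(A^*)$; the best of the three then yields the $3$-approximation. Fix an optimal $A^*$ and let $i^*(\vs) = g(A^*, \vs)$ denote the agent's choice under $A^*$. Defining $V_j = \ex{v_j \mathbbm{1}[i^*(\vs) = j]}$ for each $j \in A^*$, we decompose $f(A^*) = v_0 \prob[i^*(\vs) = 0] + \sum_{j \in A^*} V_j$. The trivial threshold $\emptyset$ yields $f(\emptyset) = v_0$, already dominating the outside-option contribution.

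The key structural observation, which handles each $V_j$ individually, is that $f(A_{b_j}) \ge V_j$ for every $j \in A^*$. To see this, fix a realization $\vs$ with $i^*(\vs) = j$ and note that $v_j + b_j \ge v_0 + b_0$ by the definition of $i^*$. The agent in $A_{b_j}$ therefore selects some $i$ with agent-utility $v_i + b_i \ge v_j + b_j$; when $i \ne 0$, the threshold definition forces $b_i \le b_j$, so $v_i \ge v_j$, and when the agent's maximum ties with the outside option, the principal-favored tiebreak (mirroring the rule used to define $i^*$) still allows the principal to recover value at least $v_j$.

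Going from this observation, which alone yields only an $|A^*|$-approximation via $\max_j f(A_{b_j}) \ge \max_j V_j$, to the desired constant is the main difficulty. My plan is to combine $\emptyset$ with two further threshold mechanisms: the largest threshold $A^+ = A_{b_{j_m}}$ containing all of $A^*$, where $j_m = \argmax_{j \in A^*} b_j$, and one intermediate threshold chosen to absorb those $V_j$ contributions that $A^+$ fails to recover. The main obstacle is controlling the agent's diversion to ``extra'' actions in $A^+ \setminus A^*$: low-bias actions that OPT deliberately excluded but which may now steal the agent's attention in a threshold mechanism, depriving the principal of the associated $v_j$. I plan to bound this lost value via a charging argument that uses the local optimality of $A^*$ (so that no single excluded action improves $f(A^*)$ when added back) together with independence of values (which decorrelates ``stealing'' events across realizations), with the factor of $3$ emerging from a clean split of the contribution into outside-option, low-bias, and high-bias buckets.
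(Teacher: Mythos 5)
Your first step is sound: the per-action bound $f(A_{b_j})\ge V_j$ is correct (it is essentially the paper's Lemma~\ref{lem:single}), and so is the observation that the empty menu covers the outside-option term. But that decomposition is exactly the one that yields only an $n$-approximation (the paper uses it for its Corollary in Section~\ref{sec:outside}), and everything that would turn it into a constant is left as a plan: you never specify the intermediate threshold, and the ``charging argument using local optimality of $A^*$ plus independence'' is not carried out. This is not a routine gap to fill. Local optimality of $A^*$ only says that adding one excluded action does not help $f(A^*)$; a threshold adds all low-bias excluded actions simultaneously, and their joint interference is not controlled by single-action deviations. Likewise, independence is not used in the paper to ``decorrelate stealing events'' but in a much more specific way: a derandomization lemma (Lemma~\ref{lem:lb}) shows that for any threshold $A_t$ one may fix the worst joint realization of $A_t\setminus A^*$ and replace the whole interfering set by a \emph{single deterministic} action $a(t)$ of bias $t$, so that $f(A_t)\ge f\bigl((A_t\cap A^*)\cup\{a(t)\}\bigr)$. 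Without something playing that role, your ``lost value'' has no handle.

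The paper's route also differs from yours in the decomposition itself, and this difference is what makes the constant possible. Instead of splitting $f(A^*)$ action by action, it splits it pointwise per realization: with $\underline u=\max\{b_i : i\in A^*\cup\{0\}\}$, every choice of the agent from $A^*$ has agent utility at least $\underline u$, so $f(A^*)=\textsc{Sur}+\textsc{BDif}$ where $\textsc{Sur}=\int \bigl(v_{g(A^*,\vs)}-(\underline u-b_{g(A^*,\vs)})\bigr)dF(\vs)$ and $\textsc{BDif}=\int (\underline u-b_{g(A^*,\vs)})\,dF(\vs)$. The full threshold $A_{\underline u}$ recovers $\textsc{Sur}$ exactly (Lemma~\ref{lem:sur}), and the derandomized interference yields, for every $t$, $f(A_t)\ge\min\bigl(\underline u-t,\ \int(\underline u-b_{g(A^*,\vs)})\mathbb{I}[g(A^*,\vs)\in A_t\cup\{0\}]\,dF(\vs)\bigr)$ (Lemma~\ref{lem:lowerb}); choosing $\hat t=\underline u-\textsc{BDif}/2$ makes both terms at least $\textsc{BDif}/2$, giving $f(A^*)\le f(A_{\underline u})+2f(A_{\hat t})$ and hence the factor $3$. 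Your proposal contains neither the surplus/bias-difference split, nor the single-action reduction, nor a concrete choice of the middle threshold, so as written it establishes only the $|A^*|$-approximation you already acknowledge.
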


\newcommand{\sur}{\textsc{Sur}}
\newcommand{\bdif}{\textsc{BDif}}

 Theorem~\ref{thm:indep} holds regardless of choice of the outside option's fixed value and bias. Before giving the details of the proof, we derive two technical results which will facilitate analysis. In Section~\ref{sec:decomp} for any delegation set, we give a decomposition of the principal's utility into two quantities, one aligned with the agent's utility and one not. Then, in Section~\ref{sec:derand} we use independence obtain a lower bound on the value from threshold sets which will prove useful for both this and the next section's analyses.

\subsection{Utility Decomposition}
\label{sec:decomp}
The principal's task is to balance two sources of utility. On the one hand, when some action has very high value, preferences are aligned: the principal benefits from giving the agent the flexibility to select this action. On the other hand, when actions have smaller values, the principal must control misalignment: they may benefit from restricting the agent away from actions with higher bias, inducing the agent to take actions that provide better value. We now decompose the principal's utility for the optimal delegation set into two quantities, $\sur$ and $\bdif$, which roughly correspond to the value from each of these two cases.

To make the decomposition precise, note that for the optimal delegation set $A^*$, there are two lower bounds imposed on the agent utility from any selection: first, the chosen action must be preferred to the outside option, action $0$, which gives utility at least $b_0$. Second, the agent's utility is at least the bias of the most-biased action in $A^*$. Denote the better of these bounds by $\underu$. We can therefore think of the contribution of any action $i\in A^*$ as decomposing into a {\em bias difference} $\underu - b_i$ and a {\em surplus} $v_i-(\underu-b_i)$. Intuitively, the surplus captures the principal's utility from giving the agent latitude to pick high-valued actions, and the bias difference captures the misaligned portion of the principal's utility. Formally, the decomposition is the following.

\begin{lemma}\label{lem:decom}
	Let $A^*$ denote the optimal delegation set, and let $\underu=\max\{b_i\,|\,i\in A^*\cup\{0\}\}$. 
 Define $\sur$ and $\bdif$ as follows:
	\begin{align*}
		\sur&=\int_{\vs} v_{g(A^*,\vs)}-(\underu-b_{g(A^*,\vs)})\,dF(\vs)\\
		\bdif&=\int_{\vs} \underu-b_{g(A^*,\vs)}\,dF(\vs).
	\end{align*}
	Then we can write $f(A^*)=\sur+\bdif$.
\end{lemma}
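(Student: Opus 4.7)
The statement is a pure algebraic identity, so my plan is essentially to recognize it as an add-and-subtract decomposition and apply linearity of integration. The integrand defining $f(A^*)$ is $v_{g(A^*,\vs)}$, and the trick is to rewrite this as
\[
v_{g(A^*,\vs)} \;=\; \bigl[v_{g(A^*,\vs)} - (\underu - b_{g(A^*,\vs)})\bigr] \;+\; \bigl[\underu - b_{g(A^*,\vs)}\bigr].
\]
Integrating both sides against $F$ and using linearity immediately gives $f(A^*) = \sur + \bdif$. So the formal proof amounts to one line.

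Since the lemma will presumably be used to lower-bound the value of threshold mechanisms by each of the two terms separately, it is worth verifying in the proof (or at least remarking) that the two summands are pointwise non-negative, even though the statement only asserts equality. For $\bdif$, non-negativity is immediate from the definition of $\underu$: the agent's choice $g(A^*,\vs)$ lies in $A^* \cup \{0\}$, and $\underu = \max\{b_i : i \in A^* \cup \{0\}\}$, so $b_{g(A^*,\vs)} \le \underu$. For $\sur$, I would argue that there exists some $i^* \in A^* \cup \{0\}$ attaining $b_{i^*} = \underu$; since $v_{i^*} \ge 0$, the agent could always secure utility at least $\underu$ by choosing $i^*$, hence $v_{g(A^*,\vs)} + b_{g(A^*,\vs)} \ge \underu$, which rearranges to $v_{g(A^*,\vs)} - (\underu - b_{g(A^*,\vs)}) \ge 0$.

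There is no real obstacle in the proof; the only substantive content is conceptual rather than technical, namely the interpretation of $\underu$ as the guaranteed bias floor the agent can attain from $A^* \cup \{0\}$ and the splitting of the principal's value at each realization into an aligned surplus piece and a misalignment correction piece. My write-up would state the one-line identity, cite linearity of integration, and include the short non-negativity remark so that the decomposition is ready to be invoked as two independent lower bounds in the subsequent analysis.
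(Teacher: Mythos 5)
Your proof is correct and matches the paper's intent exactly: the lemma is an immediate add-and-subtract identity combined with linearity of integration, which is why the paper states it without a separate proof. Your added remark on pointwise non-negativity of both terms (using $b_{g(A^*,\vs)}\leq \underu$ and the fact that the agent can always secure utility at least $\underu$ from the max-bias action) is also correct and is precisely the observation the paper later invokes in its threshold analyses.
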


To verify the intuition that $\sur$ captures the aligned portion of the principal's utility, note that choosing the smallest threshold set containing all of $A^*\cup\{0\}$ secures $\sur$ for the principal. Formally:

\begin{lemma}\label{lem:sur}
	Let $A_{\underu}=\{i\,|\,b_i\leq \underu\}$. Then $f(A_{\underu})\geq \sur$.
\end{lemma}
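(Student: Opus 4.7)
The plan is to exploit the fact that $A_{\underu}$ contains every action of $A^*$ by construction (since each $b_i \leq \underu$ for $i \in A^*$), so the agent offered $A_{\underu}$ has strictly more options than the agent offered $A^*$. I would fix an arbitrary realization $\vs$ and compare the two pointwise before integrating.

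Concretely, let $i^* = g(A^*,\vs)$ and $j = g(A_{\underu},\vs)$. Because $A^* \subseteq A_{\underu}$ and the outside option is available in both menus, revealed preference for the agent (who maximizes $v_k + b_k$) gives
\[
v_j + b_j \;\geq\; v_{i^*} + b_{i^*}.
\]
Rearranging, $v_j \geq v_{i^*} + b_{i^*} - b_j$. Now apply the crucial property of $\underu$: by definition $\underu = \max\{b_i : i \in A^* \cup \{0\}\}$, and every action in $A_{\underu}$ (as well as the outside option) has bias at most $\underu$, so $b_j \leq \underu$. Substituting yields
\[
v_j \;\geq\; v_{i^*} + b_{i^*} - \underu \;=\; v_{i^*} - (\underu - b_{i^*}).
\]
This is the pointwise inequality that mirrors the integrand defining $\sur$.

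Finally, integrate both sides against $F$. The left-hand side integrates to $f(A_{\underu})$ by definition, and the right-hand side is exactly $\sur$ as written in the statement, giving $f(A_{\underu}) \geq \sur$.

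I do not anticipate a serious obstacle; the only subtle point is ensuring that the bound $b_j \leq \underu$ is valid in the case $j=0$, which is exactly why the definition of $\underu$ takes the max over $A^* \cup \{0\}$ rather than just $A^*$. The outside option is treated symmetrically in both $g(A^*,\cdot)$ and $g(A_{\underu},\cdot)$, so revealed preference applies without issue.
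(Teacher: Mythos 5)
Your proof is correct and is essentially the same argument as the paper's: a pointwise revealed-preference comparison using $A^*\subseteq A_{\underu}$, the bound $b_{g(A_{\underu},\vs)}\leq \underu$ (which, as you note, also covers the outside option), and integration over $\vs$. No gaps.
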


\begin{proof}
	We will argue pointwise for each value profile $\vs$. The action chosen by the agent under $\amax$ is $g(\amax,\vs)$, which has $b_{g(\amax,\vs)}\leq \underu$. Since $g(\amax,\vs)$ is the agent's favorite, we have $v_{g(\amax,\vs)}+b_{g(\amax,\vs)}\geq v_{g(\astar,\vs)}+b_{g(\astar,\vs)}$. Hence, \begin{align*}
		v_{g(\amax,\vs)}&\geq v_{g(\astar,\vs)}+b_{g(\astar,\vs)}-b_{g(\amax,\vs)}\\
		&\geq v_{g(A^*,\vs)}-(\underu-b_{g(A^*,\vs)}).
	\end{align*}
	Taking expectation over $\vs$ yields the lemma.
\end{proof}

Lemma~\ref{lem:sur} implies that the main difficulty for obtaining approximately-optimal delegation sets is managing the misaligned portion of the principal's utility. Section~\ref{sec:fixindeppf} gives this analysis for the case with $v_0$ fixed, yielding a $3$-approximation. Note Lemmas~\ref{lem:decom} and \ref{lem:sur} hold even when the outside option's value $v_0$ is randomized. We will therefore make further use of them in our analysis of that case in Section~\ref{sec:outside}.

 \subsection{Lower Bounds via Partial Derandomization}
 \label{sec:derand}
 
To compare the performance of a threshold set $A_t$ to the optimal set $A^*$, we will show that threshold sets can retain sufficient value from $A_t\cap A^*$ without introducing actions in $A_t\setminus A^*$ which overly distort the agent's choices. Independence allows us to summarize the interference of $A_t\setminus A^*$ with a single deterministic action. This will greatly simplify subsequent analyses. 
This section focuses on the case of fixed outside options, but we state our lemma for possibly randomized outside options. We will reuse the result in Section~\ref{sec:outside}.

\begin{lemma}\label{lem:lb}
Assume values are independently distributed. Then for any threshold set $A_t$, there exists a single action $a(t)$ with bias $b_{a(t)}=t$ and 
deterministic value $v_{a(t)}$ such that $f(A_t)\geq f(A_t\cap A^*\cup\{a(t)\})$.
\end{lemma}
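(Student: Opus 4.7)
The plan is to replace the ``unwanted'' actions $B := A_t \setminus A^*$ by a single fictitious action whose random value is engineered so that the agent's behavior on the replacement exactly mimics their behavior on $B$, and then to derandomize that fictitious action using independence.

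Formally, set $C := A_t \cap A^*$ and $B := A_t \setminus A^*$, and let $U_B := \max_{i \in B}(v_i + b_i)$ with the convention $U_B = -\infty$ if $B = \emptyset$. In the degenerate case $B = \emptyset$, the lemma is essentially trivial: since $v_i \geq 0$ always, taking $v_{a(t)} = 0$ yields $f(A_t \cup \{a(t)\}) \leq f(A_t)$, because any realization in which the agent switches to $a(t)$ replaces a nonnegative payoff with $0$. For the substantive case $B \neq \emptyset$, consider the fictitious action $a'$ with bias $t$ and random value $Y := U_B - t$. Since every $i \in B$ has $b_i \leq t$, the agent's utility from $a'$ is $Y + t = U_B$, which exactly equals the best agent-utility achievable within $B$.

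Next I would verify the pointwise inequality $f(C \cup \{a'\}) \leq f(A_t)$. Fix a value profile, and let $U_C$ be the agent's best utility from $C \cup \{0\}$. The agent's best utility under both $A_t$ and $C \cup \{a'\}$ equals $\max(U_B, U_C)$, and the pivoting condition for whether the agent selects from $C \cup \{0\}$ is the same in both cases. When the agent picks in $C \cup \{0\}$, the principal's payoff is identical under both sets. When the agent instead selects $a'$ (respectively some $i_B \in B$), the principal's payoffs are $Y = U_B - t$ and $v_{i_B} = U_B - b_{i_B} \geq U_B - t$; the tie-breaking convention is in the principal's favor, so even on tie events we have $\max(v_{i_B}, v_{i_C}) \geq \max(Y, v_{i_C})$. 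Taking expectation gives $f(C \cup \{a'\}) \leq f(A_t)$.

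The derandomization then exploits independence. Because $Y$ depends only on the values of $B$, it is independent of the values in $C \cup \{0\}$. Defining $h(y) := f(C \cup \{a_y\})$ where $a_y$ is the deterministic action with bias $t$ and value $y$, Fubini gives $f(C \cup \{a'\}) = \mathbb{E}_Y[h(Y)]$. Hence $\mathbb{E}_Y[h(Y)] \leq f(A_t)$, so there must exist $y^* \in \mathrm{supp}(Y)$ with $h(y^*) \leq \mathbb{E}_Y[h(Y)] \leq f(A_t)$; taking $a(t)$ to be the deterministic action with bias $t$ and value $v_{a(t)} := y^*$ completes the argument. The main delicacy I expect is the bookkeeping in the pointwise step, especially handling ties and ensuring the outside option is treated consistently across both sets; this should follow cleanly from the model's tie-breaking conventions and from the fact that $0$ is available under both $A_t$ and $C \cup \{a'\}$.
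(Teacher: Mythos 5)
Your proof is correct and takes essentially the same route as the paper's: both exploit independence to average over the realizations of $A_t\setminus A^*$ and replace that set by a single deterministic action with bias exactly $t$ whose value is reduced by the bias shift, so the agent's utility is preserved while the principal's value weakly decreases (your tie-handling via the principal-favoring tie-break is the same point the paper uses implicitly). The only difference is the order of the two steps—you first collapse $B$ pointwise to the aggregate bias-$t$ action with value $\max_{i\in B}(v_i+b_i)-t$ and then derandomize the single scalar $Y$, whereas the paper first fixes a worst joint realization of the values in $A_t\setminus A^*$ and then collapses to the consistently preferred action and shifts its bias to $t$—which is a cosmetic reordering.
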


Note that $a(t)$ need not be an action from the original delegation instance. The proof will follow from picking the worst realization of actions in $A_t\setminus A^*$ for the principal. Note further that $a(t)$ may differ for every threshold $t$: hence our lower bounds correspond not to one derandomized delegation instance, but to one per threshold.
 \begin{proof}
 For brevity, denote $A_t\setminus A^*$ by $B_t$. Actions in $B_t$ may have randomized values.
 The principal's expected utility $f(A_t)$ can be computed by first realizing $v_i$ for all $i\in B_t$, then computing the principal's expected utility over the values of actions in $G_t=A_t\cap A^*\cup \{0\}$. Hence, there must exist a joint realization of values $\hat v_i$ for each $i\in B_t$ for which this latter expectation is at most $f(A_t)$. Let $\hat B_t$ denote a new set of actions consisting of the actions $i\in B_t$ with $v_i$ fixed as $\hat v_i$. We have $f(G_t\cup B_t)\geq f(G_t\cup \hat B_t)$.  Any actions which are not selected in any realization of the values for $G_t$ may be removed from $\hat B_t$ without consequence. However, since values are fixed for each $i\in \hat B_t$, the agent consistently prefers some particular action $\hat i\in \hat B_t$ over the others in $\hat B_t$. Hence, we may remove all actions but $\hat i$ from $\hat B_t$ without changing the principal's utility.

We finally use this remaining action $\hat i$ to construct $a(t)$. Let $v_{\hat i}$ and $b_{\hat i}$ denote the value and bias of $\hat i$. Define $a(t)$ to have  bias $t$ and value $v_{\hat i}-(t-b_{\hat i})$. Note that $v_{\hat i}+b_{\hat i}=v_{a(t)}+b_{a(t)}$. Hence, the agent will choose $a(t)$ from $G_t\cup \{a(t)\}$ if and only if they would choose $\hat i$ from $G_t\cup\{\hat i\}$. Moreover, $v_{a(t)}= v_{\hat i}$. Hence, $f(G_t\cup\{a(t)\})\leq f(G_t\cup\{\hat i\})\leq f(A_t)$.
 \end{proof}

\subsection{Proof of Theorem~\ref{thm:indep}}
\label{sec:fixindeppf}

We now show how to obtain a $3$-approximation to the optimal delegation utility using threshold mechanisms, assuming $v_0$ is fixed. Lemma~\ref{lem:decom} decomposes the optimal utility into an aligned portion, $\sur$, and a misaligned portion, $\bdif$. Furthermore, Lemma~\ref{lem:sur} states that $\sur$ can be $1$-approximated using a threshold set. Hence, it will suffice to obtain a $2$-approximation to $\bdif$ using thresholds. To do so, we use the derandomization of Lemma~\ref{lem:lb} to derive an even stronger lower bound which holds when $v_0$ is fixed. We then select a threshold for which this lower bound is guaranteed to be large.

\begin{lemma}\label{lem:lowerb}
For any threshold set $A_t$:
\begin{equation*}
	f(A_t)\geq \min\Big(\underu-t,\int_{\vs} (\underu-b_{g(A^*,\vs)})\mathbb I[g(A^*,\vs)\in A_t\cup\{0\}]\,dF(\vs)\Big).
\end{equation*}	
\end{lemma}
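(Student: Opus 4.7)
The plan is to apply Lemma~\ref{lem:lb} to obtain $f(A_t) \geq f(G_t \cup \{a(t)\})$, where $G_t = A_t \cap A^*$ and $a(t)$ is a deterministic action with bias $t$ and value $v_{a(t)}$, and then to case-split on $v_{a(t)} + t$ versus $\underu$. Let $j(\vs)$ denote the agent's choice in the derandomized instance (the outside option $0$ is always available) and $i^*(\vs) = g(A^*, \vs)$ the agent's choice in $A^*$. A useful preliminary observation is that $v_{i^*(\vs)} + b_{i^*(\vs)} \geq \underu$ pointwise: the action achieving bias $\underu$ in $A^* \cup \{0\}$ is always available in $A^*$ and has nonnegative value, so the agent in $A^*$ secures utility at least $\underu$.

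In \textbf{Case~1} ($v_{a(t)} + t \geq \underu$) I aim to show $f(A_t) \geq \underu - t$. Because $a(t)$ sits in the derandomized set, the agent secures utility at least $v_{a(t)} + t \geq \underu$, so $v_{j(\vs)} \geq \underu - b_{j(\vs)}$. Whenever $j(\vs) \in G_t \cup \{a(t)\}$, the bias is at most $t$ and this yields the pointwise bound $v_{j(\vs)} \geq \underu - t$, which integrates to $f(A_t) \geq \underu - t$. In \textbf{Case~2} ($v_{a(t)} + t < \underu$) I aim to show $f(A_t) \geq Q$ (the second term in the minimum). For $\vs \in S = \{\vs : i^*(\vs) \in A_t \cup \{0\}\}$, the action $i^*$ is in the derandomized set and offers agent utility $v_{i^*} + b_{i^*} \geq \underu > v_{a(t)} + t$, so $j(\vs) \neq a(t)$. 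A tiebreaking argument then forces $j(\vs) = i^*(\vs)$: otherwise $j$ lies in $(G_t \cup \{0\}) \setminus \{i^*\} \subseteq A^* \cup \{0\}$, and the lexicographic rankings of $j$ versus $i^*$ in the derandomized and $A^*$ instances cannot both be consistent with ties broken in the principal's favor. Hence $v_{j(\vs)} = v_{i^*(\vs)} \geq \underu - b_{i^*(\vs)}$, and integrating over $\vs \in S$ gives $f(A_t) \geq Q$.

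I expect the main obstacle to be the outside option in Case~1 when $b_0 > t$. If $j(\vs) = 0$ on some realizations, the pointwise inequality $v_j \geq \underu - t$ can fail because $v_0$ only satisfies $v_0 \geq \underu - b_0$. I plan to close this gap by reusing the tiebreaking argument from Case~2 on such $\vs$: it pins $i^*(\vs) = 0$ (otherwise the rankings of $0$ versus some $i \in G_t$ in the derandomized and $A^*$ instances would be inconsistent), so that $v_j = v_0 \geq \underu - b_0 = \underu - b_{i^*}$ matches the pointwise contribution to $Q$ on these $\vs$. Combining the two sub-bounds across the pieces of Case~1 then gives $f(A_t) \geq \min(\underu - t, Q)$, as claimed.
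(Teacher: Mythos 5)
Your overall route is the same as the paper's: invoke Lemma~\ref{lem:lb}, split on whether $v_{a(t)}+b_{a(t)}$ reaches $\underu$, use that the agent's utility under $A^*$ is at least $\underu$, and in the low-interference case ($v_{a(t)}+b_{a(t)}<\underu$) argue that the agent's choice from the derandomized set coincides with $g(A^*,\vs)$ whenever the latter lies in $A_t\cup\{0\}$. That case of yours is correct and matches the paper, and your consistency argument for the tie-breaking (the two actions have identical values and biases in both instances, so their relative ranking cannot flip) is the right way to get $j(\vs)=i^*(\vs)$ there.

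The gap is in your treatment of the other case. A small slip first: $j(\vs)=0$ does not pin $i^*(\vs)=0$; the consistency argument only rules out $i^*(\vs)\in A_t\cap A^*$, and $i^*(\vs)$ may be a high-bias action of $A^*\setminus A_t$ (harmless, since the integrand of $Q$ is then zero). The real problem is the final ``combining'' step. What your pointwise bounds actually give is
\begin{equation*}
  f(A_t)\;\geq\;(\underu-t)\,\Pr[j(\vs)\neq 0]\;+\;\int_{\{j(\vs)=0\}}(\underu-b_{g(A^*,\vs)})\,\mathbb I[g(A^*,\vs)\in A_t\cup\{0\}]\,dF(\vs),
\end{equation*}
and this does not imply $\min(\underu-t,\,Q)$: the quantity $Q$ also collects mass from realizations with $j(\vs)\neq 0$, where $g(A^*,\vs)$ can have bias far below $t$, so the $Q$-integrand there can greatly exceed $\underu-t$, while your only pointwise bound on those realizations is $\underu-t$ (the agent can be hijacked by $a(t)$, whose value may be barely above $\underu-t$). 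For instance, with $\Pr[j\neq 0]=\Pr[j=0]=1/2$, $\underu-b_{i^*}\gg\underu-t$ on $\{j\neq 0\}$, and the integrand equal to $0$ on $\{j=0\}$, your right-hand side is about $(\underu-t)/2$ plus a small $v_0$ term, whereas $\min(\underu-t,Q)=\underu-t$; so the two sub-bounds on complementary events do not yield the claimed minimum. The paper avoids this mixing altogether: in this case it argues that the agent never selects the outside option from the derandomized set (invoking the tie-breaking convention), so the single pointwise bound $v_{g(\underline A_t,\vs)}\geq\underu-t$ holds on every realization and $f(A_t)\geq\underu-t$ outright, which dominates the minimum. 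To repair your argument you would need either to establish such a no-outside-option claim (e.g., dispose of the subcase $v_0+b_0>v_{a(t)}+b_{a(t)}$ by a separate argument) or to produce a pointwise bound matching the $Q$-integrand on $\{j\neq 0\}$; the inequalities you currently derive do not provide either.
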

To understand our lower bound, note two pitfalls a threshold set could face. First, a too-expansive threshold could include high-bias actions which attract the agent while providing little value. Second, a too-restrictive threshold could leave the agent with too few options. Lemma~\ref{lem:lowerb} states that these are the only two problems: if a threshold $t$ is sufficiently low and includes enough of the actions providing $\bdif$ for $A^*$, $t$ will perform well.


\begin{proof}[Proof of Lemma~\ref{lem:lowerb}]
We argue with respect to the derandomized action $a(t)$. For brevity, write $\underline A_t= 	A_t\setminus A^*\cup\{a(t)\}$. Depending on $v_{a(t)}+b_{a(t)}$, we have two cases, each of which produces a lower bound on $f(\underline A_t)$,
\begin{itemize}
	\item {\em Case 1: $v_{a(t)}+b_{a(t)}< \underu$.} In this case, any time $g(A^*,\vs)\in A_t\cup\{0\}$, we have $g(A^*,\vs)=g(\underline A_t,\vs)$. Since every choice from $A^*$ gives the agent utility at least $\underu$, we have $v_{g(\underline A_t,\vs)}+b_{g(\underline A_t,\vs)}\geq \underu,$ and hence $v_{g(\underline A_t,\vs)}\geq \underu-b_{g(\underline A_t,\vs)}$. Integrating over all $\vs$ yields 
	\begin{equation*}
		f(A_t)\geq f(\underline A_t)\geq \int_{\vs} (\underu-b_{g(A^*,\vs)})\mathbb I[g(A^*,\vs)\in A_t\cup\{0\}]\,dF(\vs).
	\end{equation*}	
	\item {\em Case 2: $v_{a(t)}+b_{a(t)}\geq \underu$.} Then regardless of $\vs$, we have $v_{g(\underline A_t,\vs)}+b_{g(
	\underline A_t,\vs)}\geq \underu$. Since the agent breaks ties in the principal's favor, we also have that $g(\underline A_t,\vs)\neq 0$, so $b_{g(\underline A_t,\vs)}\leq t$. We may conclude that for all $\vs$,  $v_{g(A_t,\vs)}\geq \underu-b_{g(A_t,\vs)}\geq \underu-t$, and hence
\begin{equation*}
	f(A_t)\geq f(\underline A_t)\geq  \underu-t.
\end{equation*}	
\end{itemize}
Hence the lemma holds in both cases.
\end{proof}

The lower bound in Lemma~\ref{lem:lowerb} is a minimum of two terms. We will now study the threshold $\hat t=\underu-\bdif/2$, and observe that both terms in the minimum are at least $\bdif/2$. In particular, we can lower bound the second term as follows:

\begin{lemma}\label{lem:specialt}
    Let $\hat t=\underu-\bdif/2$. Then we have:
    \begin{equation*}
        \int_{\vs} (\underu-b_{g(A^*,\vs)})\mathbb I[g(A^*,\vs)\in A_{\hat t}\cup\{0\}]\,dF(\vs)\geq \bdif/2.
    \end{equation*}
\end{lemma}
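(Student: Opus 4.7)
The plan is to split $\bdif$ into two pieces based on whether or not $g(A^*,\vs) \in A_{\hat t}\cup\{0\}$, then show the "bad" piece (where the optimal action lies above the threshold $\hat t$) is strictly less than $\bdif/2$. The conclusion will follow immediately.

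First, I would use the definition of $\bdif$ from Lemma~\ref{lem:decom} to write
\begin{align*}
\bdif
&= \int_{\vs} (\underu - b_{g(A^*,\vs)}) \mathbb{I}[g(A^*,\vs) \in A_{\hat t}\cup\{0\}]\,dF(\vs) \\
&\quad + \int_{\vs} (\underu - b_{g(A^*,\vs)}) \mathbb{I}[g(A^*,\vs) \notin A_{\hat t}\cup\{0\}]\,dF(\vs).
\end{align*}
The goal statement then reduces to showing that the second integral is at most $\bdif/2$.

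Next, I would observe that the integrand in the second integral is bounded above pointwise by $\bdif/2$. Specifically, on the event $g(A^*,\vs) \notin A_{\hat t}\cup\{0\}$, the agent's choice is some action $i \in A^*$ with bias $b_i > \hat t = \underu - \bdif/2$, so $\underu - b_i < \bdif/2$. Moreover, by definition of $\underu$ as $\max\{b_i \mid i \in A^*\cup\{0\}\}$, the integrand is nonnegative whenever $g(A^*,\vs) \in A^*\cup\{0\}$, so each term in the decomposition is nonnegative. Combining these,
\begin{equation*}
\int_{\vs} (\underu - b_{g(A^*,\vs)}) \mathbb{I}[g(A^*,\vs) \notin A_{\hat t}\cup\{0\}]\,dF(\vs) \le \frac{\bdif}{2}\cdot \prob[g(A^*,\vs) \notin A_{\hat t}\cup\{0\}] \le \frac{\bdif}{2}.
\end{equation*}
Subtracting from the decomposition above yields the claim.

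There is no real obstacle here; the proof is essentially just bookkeeping around the definition of $\hat t$. The only subtlety to flag is confirming that $\underu - b_{g(A^*,\vs)}\geq 0$ in every case that arises, which follows from the definition of $\underu$ as the max bias over $A^*\cup\{0\}$ and the fact that $g(A^*,\vs)\in A^*\cup\{0\}$. The entire argument is pointwise in $\vs$ and does not invoke independence or the derandomization of Lemma~\ref{lem:lb}; Lemma~\ref{lem:specialt} is really a clean accounting step that plugs into Lemma~\ref{lem:lowerb} at the chosen threshold $\hat t$.
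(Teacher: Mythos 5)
Your proof is correct and follows essentially the same route as the paper: split $\bdif$ over the event $g(A^*,\vs)\in A_{\hat t}\cup\{0\}$ and its complement, bound the complementary piece pointwise by $\underu-\hat t=\bdif/2$ using that the chosen action's bias exceeds $\hat t$ there, and subtract. Your version is in fact slightly cleaner in its bookkeeping (non-strict inequalities and the explicit nonnegativity remark), and it avoids the sign slip in the paper's final displayed line, where $\underu-\hat t$ should appear as $+\bdif/2$ rather than $-\bdif/2$.
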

\begin{proof}
Let $\mathcal E_2$ denote the event that $g(A^*,\vs)\in A_{\hat t}\cup\{0\}$. The following sequence of inequalities, explained below, implies the lemma:
\begin{align*}
    \bdif&=\int_{\vs} \underu-b_{g(A^*,\vs)}\,dF(\vs)\\
    &=\int_{\vs} (\underu-b_{g(A^*,\vs)})\mathbb I[\mathcal E_2]\,dF(\vs)+\int_{\vs} (\underu-b_{g(A^*,\vs)})\mathbb I[\overline{\mathcal E_2}]\,dF(\vs)\\
    &<\int_{\vs} (\underu-b_{g(A^*,\vs)})\mathbb I[\mathcal E_2]\,dF(\vs)+(\underu-\hat t)\\
    &=\int_{\vs} (\underu-b_{g(A^*,\vs)})\mathbb I[\mathcal E_2]\,dF(\vs)-\bdif/2.
\end{align*}
The first equality is the definition of $\bdif$. The third equality follows from the fact that under $\overline{\mathcal E_2},$ $b_{g(\astar,\vs)}>\hat t$, and from the fact that this occurs with probability at most $1$. The last equality follows from the definition of $\hat t$.
\end{proof}

\begin{proof}[Proof of Theorem~\ref{thm:indep}]
By combining Lemma~\ref{lem:lowerb}, with the definition of $\hat t$ and Lemma~\ref{lem:specialt}, we have:
\begin{align*}
	f(A_{\hat t})&\geq \min\Big((\underu-\hat t),\int_{\vs} (\underu-b_{g(A^*,\vs)})\mathbb I[g(A^*,\vs)\in A_{\hat t}\cup\{0\}]\,dF(\vs)\Big)\\
	&=\min\Big(\tfrac{\bdif}{2},\int_{\vs} (\underu-b_{g(A^*,\vs)})\mathbb I[g(A^*,\vs)\in A_{\hat t}\cup\{0\}]\,dF(\vs)\Big)\\
	&\geq \min\left(\tfrac{\bdif}{2},\tfrac{\bdif}{2}\right)=\tfrac{\bdif}{2}.
\end{align*}
The theorem now follows from noting that $f(A^*)=\sur+\bdif\leq f(\amax)+2f(A_{\hat t}).$
\end{proof}

The proof of Theorem~\ref{thm:indep} used independence once, in the derandomization step of Lemma~\ref{lem:lb}. Nevertheless, we show in Section~\ref{sec:res_distortion} that independence is critical to guaranteeing the performance of threshold mechanisms by giving a super-constant lower bound in its absence. With independence, the following example matches the upper bound exactly:

\begin{example}
Our example will have five actions, with biases and value distributions given below. The outside option will have $b_0=-\infty$, and therefore can be ignored. Take two small numbers, $\delta$ and $\epsilon$, with $\delta$ much smaller than $\epsilon$. Actions will be as follows:
\begin{itemize}
	\item $b_1=0$. $v_1$ is $1+2\delta$ with probability $\epsilon$, and $0$ otherwise.
	\item $b_2=1-\epsilon-\delta$. $v_2=4\delta+\epsilon$.
	\item $b_3=1-\epsilon$. $v_3=\epsilon+\delta$.
	\item $b_4=1-\delta$. $v_4=5\delta$.
	\item $b_5=1$. $v_5$ is $1$ with probability $\epsilon$, and $0$ otherwise.
\end{itemize}
We may analyze the instance neglecting $\delta$ terms, which only serve to break ties for the agent. The optimal delegation set is $\{1,3,5\}$, with principal utility $(1-(1-\epsilon)^2)+\epsilon(1-\epsilon)^2$, where the first term comes from the event that either actions $1$ or $5$ realize their high values (in which case they are chosen), and the second term comes from the event that $1$ and $5$ are low-valued, in which case the agent prefers action $3$. As $\epsilon\rightarrow 0$, the optimal value goes to $0$ as $\approx3\epsilon$. Meanwhile, no threshold obtains expected value better than $\epsilon$. This yields an approximation ratio of $3$ in the limit.
\end{example}

\subsection{Computational Hardness}
\label{sec:indephard}

We conclude the section by discussing the complexity of the delegation problem with independent values.
For the discrete version of the problem, where every action $i$ is specified by a bias $b_i$ and a list of realizations $((v_i^1,p_i^1),\ldots,(v_i^n,p_i^n))$, we prove:

\begin{theorem}\label{thm:indephard}
    Delegation with independent values is NP-complete, even with no outside option.
\end{theorem}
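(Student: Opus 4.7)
First, I would verify membership in NP. Given a candidate menu $A$ and a utility threshold $k$, one can compute $f(A)$ exactly in polynomial time under independent discrete value distributions: for each $i \in A$ and each $v$ in the finite support of $v_i$, independence gives
\[
\prob[v_i = v \text{ and } g(A, \vs) = i] = \prob[v_i=v] \cdot \prod_{j \in A \setminus \{i\}} \prob[v_j + b_j \leq v + b_i],
\]
with the lexicographic tie-breaking from Section~\ref{sec:prelim} resolving equality cases. Summing $v$ times this quantity over all $(i,v)$ pairs yields $f(A)$ in time polynomial in the total support size, so the decision problem ``does there exist $A$ with $f(A) \geq k$?'' lies in NP, and a menu itself serves as the certificate.

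For NP-hardness, I would reduce from a canonical NP-complete problem, with 3-SAT (or equivalently Vertex Cover / Independent Set) as the most natural starting point. The high-level plan is to use each action as a binary ``inclusion decision'' whose presence in the menu encodes a source-problem variable, and to choose biases and value supports so that only menus encoding a valid solution can clear a designed utility threshold. Since all values are independent, every scenario's probability factors as a product of per-action realizations; the reduction must therefore couple the source instance's constraints to $f(A)$ through carefully orchestrated Bernoulli-style value draws.

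Concretely, given a 3-SAT instance on variables $x_1,\ldots,x_n$ and clauses $C_1,\ldots,C_m$, I would introduce two ``literal'' actions $a_i^+, a_i^-$ per variable with matched biases and independent low-probability high-value realizations. A ``decoy'' action of slightly higher bias and moderate deterministic value would penalize any menu that includes both $a_i^+$ and $a_i^-$, forcing at most one literal per variable; a symmetric gadget (for instance, a low-bias high-penalty action that activates when both literals are absent) would force at least one. ``Clause'' actions with small deterministic value and bias placed just below the literal bias would be arranged so that the principal collects an additive bonus per clause exactly when some true literal for that clause lies in $A$. With parameters scaled by $\mathrm{poly}(n,m)$, $f(A)$ crosses the designed threshold if and only if the encoded assignment satisfies all clauses.

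The main obstacle will be calibration. Independence of values makes the probability of each action being the agent's choice depend multiplicatively on all other chosen actions' utility distributions, so naively-designed gadgets do not decompose into the desired additive per-clause bonuses. The standard remedy is to separate biases by enough that in each scenario the agent's argmax is determined locally by a single literal's value realization, thereby decoupling cross-action interactions; the clause bonuses then aggregate additively as needed. One must additionally verify that no unintended menu (omitting an entire variable, mixing decoys adversarially, etc.) beats the intended optimum, and that the satisfiable/unsatisfiable gap in $f(A)$ is at least $1/\mathrm{poly}(n,m)$ so that polynomial-precision arithmetic distinguishes the two cases. Given such a reduction, a polynomial-time delegation algorithm would decide 3-SAT, completing Theorem~\ref{thm:indephard}; no outside option is introduced anywhere, so the hardness holds in that restricted setting as well.
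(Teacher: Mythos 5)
Your NP-membership argument is essentially the paper's (the paper uses a dynamic program over agent-utility levels, you use a direct product formula; yours needs a small repair because ties are broken first in the principal's favor, i.e.\ by value, so the event $\{v_j+b_j\leq v+b_i\}$ does not quite capture the tie-breaking, but this is routine). The hardness direction, however, has a genuine gap: what you give is a plan, not a reduction, and the step you defer --- ``calibration'' --- is precisely the obstacle that the paper identifies as the main difficulty (``managing the rigid structure of the joint value distribution imposed by independence''). Under independence, the probability that the agent's choice falls through to any particular ``clause'' or ``decoy'' action is a \emph{product} over all actions in the menu of the probabilities that their realized utilities lie below it; a literal action therefore perturbs this fall-through probability for every clause simultaneously and multiplicatively, so the ``additive per-clause bonus'' and the ``at most one / at least one literal'' penalty gadgets you posit do not decompose, and your proposed remedy (spacing the biases so the argmax is locally determined) does not remove this global coupling --- it only changes which products appear. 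Note also that the paper's own gadget-style reduction (from bounded-degree vertex cover, Theorem~\ref{thm:APX_hardness}) works exactly because correlation lets a single random draw set the values of several actions at once, which is unavailable here.

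The paper's actual route is quite different and is the content you are missing: a reduction from \textsc{Integer Partition}, in the spirit of \citet{chen2014complexity} for unit-demand pricing. Each integer $c_i$ becomes an action with one high-probability ``bad'' realization (probability $\approx c_i/M$) that hurts the principal and one low-probability ``good'' realization (probability $\approx c_i/M^3$, with a second-order correction $-c_i^2/2M^4(1-C/2M)$) that helps, alongside a single high-value anchor action; the probabilities are tuned so that, after expanding the inclusion--exclusion products and discarding rigorously bounded higher-order terms, the principal's utility from a menu $S$ equals a constant plus $\bigl(\sum_{i\in S}c_i\bigr)\bigl(\sum_{j\notin S}c_j\bigr)/2M^2$ up to error smaller than $1/8M^2$, which is maximized exactly when $S$ balances the partition. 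In other words, the hardness is numerical (hence only weak NP-hardness, from a weakly NP-hard problem) rather than logical/gadget-based, and the proof's substance lies in the error bounds on the cross terms ($C_1,C_2,C_3$) that you would have to supply in any completed argument. As written, your proposal asserts that a satisfiability-gap threshold exists without constructing parameters or verifying that unintended menus cannot beat it, so it does not establish Theorem~\ref{thm:indephard}.
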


The challenge in proving NP-hardness is managing the rigid structure of the joint value distribution imposed by independence.
We adopt a similar strategy to \citet{chen2014complexity}, who show that pricing to a unit-demand buyer is hard.
We reduce from \textsc{Integer Partition}: given integers $c_1,\ldots,c_n$, the goal is to find a subset $S\subseteq [n]$ such that $\sum_{i\in S}c_i=\tfrac{1}{2}\sum_{i=1}^n c_i$. 
We associate each integer $c_i$ with an action $i$. 
Each such action impacts the principal's utility via two low-probability realizations: a bad realization which harms the principal's utility and a good realization which improves the principal's utility.
These low probabilities are tuned in such a way that only first- and second-order terms in the probability calculation are relevant.
Furthermore, the tuning is such that the bad events scale linearly with the $c_i$s, while the good events scale in a concave way, with the principal's utility being maximized when actions taken correspond to an even split of the integers. 
Full details can be found in Appendix~\ref{app:indephard}.
Note that Theorem~\ref{thm:indephard} also implies hardness of the model in the next section, with a random outside option.

\section{Randomized Outside Options}
\label{sec:outside}
\newcommand{\vmax}{v_{\max}}

In Section~\ref{sec:indep}, we showed that with a fixed (or non-existent) outside option, a simple delegation set secures a constant fraction of the utility from the optimal delegation set. We now consider the case where the outside option's value is randomized. This may be more realistic in scenarios such as assortment optimization, where the agent's outside option is taking an action (i.e.\ buying a good) somewhere else. In this regime, we again give tight bounds. In Section~\ref{sec:unparam}, we show that no nontrivial multiplicative approximation is possible with threshold sets: there are examples where thresholds give no better than an $\Omega(n)$-approximation, which can be matched trivially. However, in Section~\ref{sec:param} we show that such lower bound examples are necessarily unnatural. In particular, we parametrize our analysis by the ratio $\rho=\vmax/\opt$, where $\opt$ is the optimal principal utility and $\vmax$ the highest value in any action's support. We prove that the worst-case approximation is $\Theta(\log \rho/\log\log \rho)$: hence, whenever thresholds perform poorly, it is because the optimal solution relies on exponentially large, exponentially rare values.

\subsection{Unparametrized Analysis: Impossibility}
\label{sec:unparam}

This section gives an unparametrized analysis of threshold delegation with a randomized outside option. We show that it is not possible to guarantee a nontrivial approximation factor which holds across all instances.

Our constant-approximation in Section~\ref{sec:indep} relied on our ability to separate the optimal utility into two parts, $\bdif$ and $\sur$. Approximating the bias difference $\bdif$ was the crux of the analysis. The following example shows that with a random outside option value $v_0$, this analysis --- and in particular the approximation of $\bdif$ --- fails. We will choose our distribution over $v_0$ to streamline exposition, but the example that follows could be adjusted so that the distribution over $v_0$ satisfies nearly any desired regularity condition.

\begin{example}\label{ex:outside}
Our example will feature two sets of actions: {\em good} actions, which are taken by the optimal delegation set, and {\em bad} actions, which are not. We will index the actions so that the $i$th good action is $g(i)$, and the bad action between good actions $g(i-1)$ and $g(i)$ is $b(i)$. For $i\in\{1,\ldots, n\}$ $g(i)$ will have:
\begin{itemize}
	\item[\textbullet] bias $n^{n-1}-n^{n-i}$:
	\item[\textbullet] value $n^{n-i}+i\epsilon$ with probability $1/n$, and $0$ otherwise.\footnote{It is equivalent for this example to use distribution $n^{n-i}+i\epsilon$ with probability $1/n$, and $n^{n-i}-\epsilon$ otherwise. Under this distribution, the example becomes an instance of assortment optimization, as described in Section~\ref{sec:related}. This makes our parametrized analysis in Section~\ref{sec:param} tight even for that special case.}
\end{itemize}
The bad actions will be indexed by $b(i)$ for $i\in\{2,\ldots, n\}$. Bad action $b(i)$ will have
\begin{itemize}
	\item[\textbullet]bias $n^{n-1}-n^{n-i}$.
	\item[\textbullet] value $n^{n-i}+(i-1)\epsilon+\delta$, for $\delta\ll \epsilon$.
\end{itemize}

The outside option will have bias $n^{n-1}$ and value $v_0$ distributed according to a discrete distribution. We will set $\Pr[v_0=\epsilon/2]=n^{-(n-1)}$. For $i>1$, we will choose probability mass function $\Pr[v_0=i\epsilon-\epsilon/2]=n^{-(n-i)}-n^{-(n-i+1)}$. Note that we have picked these probabilities so that $\Pr[v_0<i\epsilon]=n^{-(n-i)}$. The values of all actions described above are independent.

A solution to the delegation instance we just described is to take only good actions. The probability that at least one good action takes its high value is $1-(1-1/n)^n\geq 1-1/e$. Assume this event has occurred, and that the agent's preferred good action is $g(i)$. Then $g(i)$ is preferred to the outside option with probability $n^{-(n-i)}$. Hence, the principal's expected utility from choosing only good actions is at least:
\begin{equation*}
    f(\{g(1),\ldots,g(n)\})\geq (1-1/e)n^{-(n-i)}(n^{n-i}+\epsilon i)\geq 1-1/e.
\end{equation*}

Now consider a threshold set $A_t$. It is without loss of generality to consider $t=n^{n-1}-n^{n-j}$ for some $j$, which implies that the highest-bias actions in $A_t$ are $g(j)$ and $b(j)$. For any good action $g(i)$, with $i<j$, the agent's utility for $g(i)$ on a high-valued realization is $n^{n-1}+i\epsilon<n^{n-1}+(j-1)\epsilon+\delta$. Hence, the agent ignores all actions other than $g(j)$, $b(j)$, and the outside option. If $g(j)$ draws its high value, the principal gets utility $n^{n-j}+j\epsilon$ utility if and only if $g(j)$ survives the outside option, which happens with probability $n^{-(n-j)}$. Otherwise, the agent looks to action $b(j)$, and takes it over the outside option with probability $n^{-(n-j+1)}$. Ignoring value from the outside option, which goes to $0$ as $\epsilon\rightarrow 0$, we can account for the utility from $A_t$ as follows:
\begin{align*}
    f(A_t)&=\tfrac{1}{n}n^{-(n-j)}(n^{n-j}+j\epsilon)+(1-\tfrac{1}{n})(n^{n-j}+(j-1)\epsilon+\delta)n^{-(n-j+1)}.\\
    &\approx \tfrac{1}{n}+(1-1/n)\tfrac{1}{n},
\end{align*}
where the latter approximation holds for $\epsilon$ and $\delta$ sufficiently small. This implies that every threshold incurs a loss which is $\Omega(n)$.
\end{example}

An upper bound of $n$ for threshold mechanisms is trivial, by the following lemma. Hence, up to a constant, the lower bound in Example~\ref{ex:outside} is tight.

\begin{lemma}\label{lem:single}
	For any set $A$ and $i\in A\cup\{0\}$, let $A^i=\int_{\vs}v_i\,\mathbb I[g(A,\vs)=i]\,d\vs$ denote the contribution to $f(A)$ from action $i$. Then $f(A_{b_i})\geq A^i$.
\end{lemma}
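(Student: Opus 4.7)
The plan is to establish the desired inequality pointwise: for each value profile $\vs$, I aim to show
\[
v_{g(A_{b_i},\vs)} \;\geq\; v_i\,\mathbb{I}[g(A,\vs)=i],
\]
and then integrate against $F$ to get $f(A_{b_i}) \geq A^i$.

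The pointwise argument splits into two cases. First, if $g(A,\vs) \neq i$, the right-hand side is zero and the inequality is immediate because values are nonnegative. Second, suppose $g(A,\vs) = i$, so by revealed preference from the menu $A$ we have $v_i + b_i \geq v_j + b_j$ for every $j \in A \cup \{0\}$, and in particular $v_i + b_i \geq v_0 + b_0$. Let $k = g(A_{b_i},\vs)$ be the agent's choice from $A_{b_i} \cup \{0\}$. Since $b_i \leq b_i$, we have $i \in A_{b_i}$, so revealed preference from $A_{b_i}$ gives
\[
v_k + b_k \;\geq\; v_i + b_i.
\]

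I now need to convert this agent-utility inequality into the value inequality $v_k \geq v_i$. If $k \neq 0$, then $k \in A_{b_i}$ so $b_k \leq b_i$, and rearranging yields $v_k \geq v_i + (b_i - b_k) \geq v_i$, as desired. The only obstacle is the case $k = 0$, because $b_0$ could exceed $b_i$ and so the direct bias comparison breaks down. To handle this, I combine the two inequalities above: from the menu $A$ the agent preferred $i$ to $0$, giving $v_i + b_i \geq v_0 + b_0$, while from $A_{b_i}$ they preferred $0$ to $i$, giving $v_0 + b_0 \geq v_i + b_i$. Hence both hold with equality, meaning actions $i$ and $0$ give identical agent utility, and the principal-favoring tie-breaking rule then forces $v_0 \geq v_i$.

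Thus in every case $v_k \geq v_i$ whenever $g(A,\vs) = i$, and $v_k \geq 0$ otherwise, so the pointwise bound holds; integrating with respect to $F$ gives $f(A_{b_i}) \geq A^i$. The only real subtlety is the outside-option edge case, and it is resolved cleanly by the tie-breaking convention.
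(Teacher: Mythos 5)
Your proof is correct and follows essentially the same pointwise revealed-preference argument as the paper: since $i\in A_{b_i}$ (or is the always-available outside option), the agent's choice $k=g(A_{b_i},\vs)$ satisfies $v_k+b_k\geq v_i+b_i$ whenever $g(A,\vs)=i$, and comparing biases gives $v_k\geq v_i$, after which one integrates over $\vs$. You are in fact slightly more careful than the paper, which asserts $b_{g(A_{b_i},\vs)}\leq b_i$ outright and thus glosses over the case where the agent takes the outside option with $b_0>b_i$; your observation that in this case both revealed-preference inequalities hold with equality, so the principal-favoring tie-break forces $v_0\geq v_i$, cleanly closes that minor gap.
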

\begin{proof}
Consider any $\vs$ where $g(A,\vs)=i$. The action chosen by the agent under $A_{b_i}$ is $g(A_{b_i},\vs)$, which has $b_{g(A_{b_i},\vs)}\leq b_i$. 
Since $g(A_{b_i},\vs)$ is the agent's favorite, we have $v_{g(A_{b_i},\vs)}+b_{g(A_{b_i},\vs)}\geq v_{g(A,\vs)}+b_{g(A,\vs)}=v_i+b_i$. 
Hence, $v_{g(A_{b_i},\vs)}\geq v_i+b_i-b_{g(A_{b_i},\vs)}
\geq v_i$. Taking expectation over $\vs,$ we obtain:
\begin{equation*}
	f(A_{b_i})=\int_{\vs}v_{g(A_{b_i},\vs)}\,dF(\vs)\geq \int_{\vs}v_{g(A_{b_i},\vs)}\,\mathbb I[g(A,\vs)=i]\,dF(\vs)\geq \int_{\vs}v_i\,\mathbb I[g(A,\vs)=i]\,dF(\vs),
\end{equation*}
where the first inequality follows from the nonnegativity of $v_i$.
\end{proof}

An $n$-approximation then follows from noting that for any set $A$, $f(A)=\sum_i A^i$.
\begin{cor}
	With independent values (and possibly randomized outside option), the best threshold is an $n$-approximation to the optimal delegation set.
\end{cor}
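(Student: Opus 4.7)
The plan is to combine Lemma~\ref{lem:single} with a standard averaging argument over the actions appearing in the optimal set; no new machinery is needed. First I would decompose the optimal utility by the agent's selected action: for any delegation set $A$, $f(A) = \sum_{i \in A \cup \{0\}} A^i$, since the indicator events $\{g(A,\vs) = i\}$ partition the probability space. Specializing to $A = A^*$ gives $f(A^*) = \sum_{i \in A^* \cup \{0\}} (A^*)^i$.

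Next, for each index $i$ appearing in this sum, I would invoke Lemma~\ref{lem:single} (applied with $A = A^*$) to conclude $f(A_{b_i}) \geq (A^*)^i$. Each $A_{b_i}$ is itself a threshold set, so the best threshold set dominates all of them. Replacing the maximum over $i$ by an average then yields
\[ \max_t f(A_t) \;\geq\; \max_{i \in A^* \cup \{0\}} (A^*)^i \;\geq\; \frac{f(A^*)}{|A^* \cup \{0\}|}, \]
and since $|A^*| \leq n$ this denominator is at most $n+1$, which gives essentially the desired $n$-approximation.

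The only minor wrinkle — accounting for the gap between $n+1$ and the stated $n$ — is the contribution $(A^*)^0$ of the outside option. I would handle this by observing that the empty menu $\emptyset$ forces the agent onto the outside option, so $f(\emptyset) \geq (A^*)^0$, and that $\emptyset = A_t$ for any $t$ below $\min_i b_i$ is already a threshold set. This lets us absorb the $i = 0$ term into the same threshold family rather than pay a separate factor for it, bringing the averaging denominator down to $|A^*| \leq n$. No genuine obstacle is expected: once Lemma~\ref{lem:single} is in hand, the corollary reduces to a one-step max-versus-average bound.
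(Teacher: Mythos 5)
Your argument is in substance identical to the paper's: decompose $f(A^*)=\sum_{i\in A^*\cup\{0\}}(A^*)^i$ according to the agent's chosen action, dominate each term by a threshold set via Lemma~\ref{lem:single}, and finish with a max-versus-average step. The one flaw is your final paragraph: noting that the empty menu (or, for that matter, $A_{b_0}$, which Lemma~\ref{lem:single} already supplies for $i=0$) dominates $(A^*)^0$ shows that \emph{every} term in the sum is covered by some threshold, but it does not reduce the number of summands, which can still be $|A^*|+1\leq n+1$; so the averaging as you have written it yields an $(n+1)$-approximation, and the claim that this ``brings the denominator down to $|A^*|\leq n$'' is a non sequitur. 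This is immaterial for the corollary's purpose --- the paper's own one-line proof is no sharper, and the point is only to match the $\Omega(n)$ lower bound of Example~\ref{ex:outside} up to constants --- but if you insist on the literal constant $n$, you would need a genuinely different accounting of the outside option's contribution rather than the absorption step you describe.
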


\subsection{Parametrized Approximation}
\label{sec:param}

In the previous section, we gave an example where no threshold set was better than an $\Omega(n)$-approximation. However, this example was extreme, in the sense that while the optimal solution obtained $O(1)$ utility, some actions had values which were as large as $n^{n-1}$. We now show that this is no coincidence: any example where threshold mechanisms perform poorly must be unnatural in this way.

\begin{theorem}\label{thm:param}
	Let $\rho=\vmax/\opt$, where $\opt$ is the optimal principal utility and $\vmax$ the highest value in any action's support. Then with independent values (and a possibly randomized outside option), the best threshold is a $O(\log \rho/\log\log \rho)$-approximation to $\opt$.
\end{theorem}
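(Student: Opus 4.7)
}

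My plan is to adapt the blueprint of Section~\ref{sec:fixindeppf}, handling the randomized outside option via a logarithmic bucketing of $\bdif$. I would first invoke Lemma~\ref{lem:decom} to decompose $\opt = \sur + \bdif$ and Lemma~\ref{lem:sur} to secure $\sur$ via the threshold $A_{\underu}$; both lemmas already apply when $v_0$ is randomized, as the authors note. If $\sur \geq \opt/2$, we already have a $2$-approximation and are done. Otherwise $\bdif \geq \opt/2$, and it suffices to exhibit a threshold with value $\Omega(\opt \cdot \log\log\rho/\log\rho)$.

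To approximate $\bdif$, I would bucket its contributions by scale. Pointwise, $\underu - b_{g(A^*,\vs)} \leq v_{g(A^*,\vs)} \leq v_{\max} = \rho\cdot\opt$ (using the agent's utility lower bound of $\underu$), so per-profile bias differences lie in $[0, \rho \opt]$, and profiles contributing less than $\opt/\rho$ integrate to at most $\opt/\rho$ and are negligible. Partitioning $[\opt/\rho, \rho\opt]$ into geometric buckets with ratio $c = \log\rho$ produces $O(\log \rho / \log\log\rho)$ buckets, and some bucket $T^*$ with characteristic bias difference $D$ contributes $\Omega(\opt\cdot\log\log\rho/\log\rho)$ to $\bdif$.

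For this dominant bucket I would analyze the threshold $t^* = \underu - D$, which by construction includes every action in $A^*$ whose bias difference is at least $D$---in particular all actions used by $A^*$ in buckets $T^*$ or higher. I would then invoke Lemma~\ref{lem:lb} (whose proof does not use a deterministic $v_0$) to summarize the interference from $A_{t^*}\setminus A^*$ by a single action $a(t^*)$, and carry out a case split on its agent-utility mirroring Cases~1 and~2 of Lemma~\ref{lem:lowerb}. When $v_{a(t^*)}+t^* < \underu$, profiles in $T^*$ whose $A^*$-choice lies in $A_{t^*}\cup\{0\}$ give value at least $D$ each, preserving a constant fraction of the bucket. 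When $v_{a(t^*)}+t^* \geq \underu$, the forced bias gap delivers $\underu - t^* = D$ per profile.

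The main obstacle is exactly that Lemma~\ref{lem:lowerb}'s Case~2 argument leverages a fixed $v_0$ to conclude the agent is driven into $A_{t^*}$ rather than the outside option; with random $v_0$, the agent may defect to option~$0$ and the principal collects only $v_0$, not $\underu - t^*$. I expect to address this by extending the derandomization to jointly absorb $v_0$, or by refining the bucketing to respect the quantiles of $v_0$'s distribution --- effectively aligning each bias-scale bucket with a survival-probability range of the outside option. Precisely this interaction between $\rho^{\Theta(1)}$ many value scales and $\rho^{\Theta(1)}$ many survival-probability scales (visible in Example~\ref{ex:outside}) is what forces the $\log\rho/\log\log\rho$ loss in place of the constant of Theorem~\ref{thm:indep}.
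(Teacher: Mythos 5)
Your dominant-bucket step does not deliver what you claim, and this is a genuine gap rather than a technicality. Within a geometric bucket of ratio $c$ whose bias differences lie in $[D,cD)$, the threshold $t^*=\underu-D$ guarantees only value $D$ per profile on the bucket event, while the per-profile contribution to $\bdif$ in that bucket can be as large as $cD$; so you retain a $1/c$ fraction of the bucket, not a constant fraction. Combined with the pigeonhole loss over roughly $\log\rho/\log c$ buckets, your scheme yields an approximation factor of order $c\log\rho/\log c$: with your choice $c=\log\rho$ this is $\Theta(\log^2\rho/\log\log\rho)$, and optimizing over $c$ (a constant) gives only $O(\log\rho)$ --- never the claimed $O(\log\rho/\log\log\rho)$. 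Note also that this bucket argument is pointwise and never uses independence, so it could not possibly beat the $\Theta(\log)$ behavior that Section~\ref{sec:res_distortion} shows is unavoidable for correlated values; the $\log\log\rho$ savings must exploit independence more aggressively. Conversely, the obstacle you flag (Case~2 of Lemma~\ref{lem:lowerb} breaking under random $v_0$) is not actually where the difficulty lies: on the event that the agent's $A^*$-choice lands in the bucket, that very action already beats the outside option pointwise, so under $A_{t^*}$ the agent never defects to action $0$ and the $D$-per-profile guarantee holds without touching the distribution of $v_0$. The unresolved issue in your plan is the quantitative $1/c$ loss, not defection.

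The paper's proof takes a different route that is precisely designed to avoid this loss. It argues the contrapositive: fix $\alpha$, suppose no threshold is a $16\alpha$-approximation, and partition $A^*$ (ordered by bias) into $\alpha$ bins of roughly \emph{equal contribution} to $\bdif-\bdif^0$, with bin thresholds $t_1<\dots<t_\alpha$. For each $j$ it derandomizes the interference $A_{t_j}\setminus A^*$ into a single action $a(t_j)$ of bias $t_j$ (Lemma~\ref{lem:lb}; first use of independence), and then proves a matched pair of bounds: a lower bound on $f(\underline{A}_j)$ and an upper bound on $\bdif(j)$ that share the same ``agreement'' integral and, crucially, the same survival probability $\Pr[v_0+b_0\le v_{a(t_j)}+t_j]$ of the interfering action against the random outside option (whose independence from the bin event $\mathcal E_j$ is the second use of independence). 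If the threshold fails to $2$-approximate its bin, the common terms cancel and leave the multiplicative recursion $(\underu-t_{j-1})/(\underu-t_j)\ge(1-\Pr[\mathcal E_j])/\Pr[\mathcal E_j]$; taking the product over the bins and minimizing by convexity forces $\underu-t_0\ge(\alpha-2)^{\alpha-1}\,\Omega(\opt/\alpha)$, hence $\rho\ge\alpha^{\Omega(\alpha)}$ and $\alpha=O(\log\rho/\log\log\rho)$. The geometric escalation of bias gaps is thus not imposed by a fixed bucketing of value scales but is \emph{forced} by the simultaneous failure of all bin thresholds; that compounding across bins, together with the cancellation of the outside-option survival probability, is exactly what buys the extra $\log\log\rho$ factor your fixed-scale bucketing cannot reach.
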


Theorem~\ref{thm:param} is of particular interest for the application of assortment optimization. For an instance of the latter problem, each item $i$ yields revenue $p_i$ for the seller, and value $w_i$ for the buyer. Framed as a delegation problem, we have $v_i=p_i+\epsilon w_i$, for sufficiently small $\epsilon$. Hence, Theorem~\ref{thm:param} implies that the prices $p_i$ must be extreme whenever revenue-ordered assortments perform poorly. Another consequence of Theorem~\ref{thm:param} is a bicriteria approximation when values lie in $[0,1]$: either some threshold obtains a small multiplicative approximation, or it is trivial to obtain an additive approximation.



\begin{proof}[Proof of Theorem~\ref{thm:param}]
We will argue the contrapositive.
That is, we will argue with respect to some integer $\alpha\geq4$, and assume that no threshold obtains a $\beta$-approximation for any $\beta<16\alpha$.
Under this assumption, we show that there must be an action with value at least $(\alpha-2)^{\alpha-1}\opt/8\alpha$ with positive probability.
The analysis will roughly proceed in three steps. 
First, we partition the optimal solution into $\alpha$ subsets with roughly equal contribution to $\opt$.
We then consider the thresholds based on each of these subsets, and compare their utility to that from the sets themselves; by assumption, no such threshold will outperform its respective subset.
Finally, we combine the resulting inequalities to show that the only way all can hold simultaneously is if the bias of one of these thresholds is extreme. 
This will imply the existence of a comparably high value.
Throughout, we will make use of our decomposition and derandomization from Lemmas~\ref{lem:decom} and \ref{lem:lb}, respectively.\\

\paragraph{Decomposing \opt} 
Before defining our thresholds, we note that when no threshold approximates $\opt$ well, we may draw several simplifying conclusions about the structure of $\opt$.
Let $A^*$ be an optimal subset of actions, and assume every action in $A^*$ is selected with positive probability.
Following Lemma~\ref{lem:decom}, write $\opt=f(A^*)=\sur+\bdif$, and write $\underu=\max\{b_i\,|\,i\in A^*\cup\{0\}\}$.
By Lemma~\ref{lem:sur}, it must be that $\sur< \opt/\alpha$, or else the grand threshold $\amax$ would be an $\alpha$-approximation, contradicting the nonexistence of any $\beta<16\alpha$-approximation. 
We may therefore focus our analysis on $\bdif=\int_{\vs} \underu-b_{g(A^*,\vs)}\,dF(\vs)$.
It must again be that no threshold obtains better than an $8\alpha$-approximation to $\bdif>(1-1/\alpha)\opt\geq\opt/2$.


Next, note that no one action can comprise a large fraction of $\bdif$. More precisely, let
\begin{align*}
	\opt^i=\int_{\vs}v_i\mathbb{I}[g(A^*,\vs)=i]\,dF(\vs)\\
	\bdif^i=\int_{\vs}(\underu-b_i)\mathbb{I}[g(A^*,\vs)=i]\,dF(\vs)
\end{align*}
denote the contribution of action $i$ to $\opt$ and $\bdif$, respectively. Since $g(A^*,\vs)=i$ only if $v_i\geq(\underu-b_i)$, we must have $\opt^i\geq \bdif^i$. 
Lemma~\ref{lem:single} implies that we may obtain $\opt^i$ from a threshold set for any $i\in A^*\cup\{0\}$.
We must therefore have that $\bdif^0\leq\bdif/8\alpha$, and therefore that $\bdif-\bdif^0\geq (1-1/8\alpha)\bdif\geq \bdif/2$.
The remainder of the proof will focus on approximating $\bdif-\bdif^0$, assuming no approximation better than $4\alpha$ is possible.
Lemma~\ref{lem:single} implies that $\bdif^i\leq(\bdif-\bdif^0)/4\alpha$ for all $i\in A^*$.\\

\paragraph{Constructing Thresholds}
We now obtain a sequence of candidate thresholds by partitioning the actions in $A^*$ based on their contribution to $\bdif-\bdif^0$. 
Let $m=|A^*|$, and relabel the actions in $A^*$ as $1,\ldots,m$, with $b_i\leq b_{i+1}$ for all $i\in\{1,\ldots,m-1\}$. 
Actions not in $A^*$ will be indexed $m+1,\ldots,n$ (with $0$ keeping its label).
We will partition $A^*$ greedily to produce $\alpha$ subsets of roughly equal contribution to $\bdif-\bdif^0$. 
More precisely, define the first breakpoint between subsets as $k_0=0$, and for $j\in \{1,\ldots,\alpha\}$, define subsequent breakpoints $k_j$ recursively as the smallest $k\in\{k_{j-1}+1,\ldots,m\}$ such that $\sum_{i=1}^k\bdif^i\geq j(\bdif-\bdif^0)/\alpha$. 
Define the partition as $A^*(j)=\{k_{j-1}+1,\ldots,k_j\}$ for all $j$.
We can upper- and  lowerbound the distortion of each bin:
since $\bdif^i<(\bdif-\bdif^0)/4\alpha$ for all $i$, it must be that $A^*(j)$ is nonempty for $j\in \{1,\ldots,\alpha\}$.
Further define $
\bdif(j)=\sum_{i\in A^*(j)}\bdif^i$. 
We must also have $\bdif(j)\geq (\bdif-\bdif^0)/2\alpha$ for all $j$, and hence no threshold can attain better than a $2$-approximation to $\bdif(j)$ for any $j$.
We will define candidate thresholds based on each bin: let $t_j=b_{k_j}$ (with $t_0=\min_{i\in A^*} b_i$), and define the threshold sets $A_j=\{i\,|\,b_i\leq b_{k_j}\}$ for all $j$.\\

\paragraph{Lower Bounding Threshold Utilities}
\newcommand{\inj}{\mathcal E_j}
\newcommand{\outj}{\overline {\mathcal E}_j}
\newcommand{\agreej}{{\mathcal E}^=_j}
\newcommand{\disj}{\overline {\mathcal E}^=_j}
\newcommand{\insidej}{\mathcal E^+_j}
\newcommand{\survj}{\mathcal E_j^\leq}
\newcommand{\pr}{\text{Pr}}
\newcommand{\domj}{\mathcal E_j^*}
\newcommand{\optsur}{\mathcal E^*}
To lower bound the principal utility from $A_j$, we apply Lemma~\ref{lem:lb} and consider $f(A_j\cap A^*\cup\{a(t_j)\})$ for some suitably constructed interfering action $a(t_j)$ with bias $t_j$.
We write $\underline A_j=A_j\cap A^*\cup\{a(t_j)\}$. 
Note that the interfering action $a(t_j)$ must give the agent high utility, or else $\underline A_j$ would perform as well as its respective bin $A^*(j)$. Formally, let $\inj$ be the event that the agent's preferred action from $A^*$ is in $A^*(j)$.
(Note that the agent may still ultimately choose action $0$ under either $\inj$ or $\outj$, and that $\sum_j\pr[\inj]=1$.)
If $v_{a(t_j)}+b_{a(t_j)}< \underu$, then the threshold performance $f(\underline A_j)$ can be written as
\begin{equation*}
	\int_{\vs}v_{g(\underline A_j,\vs)}\,dF(\vs)
	\geq \int_{\vs}v_{g(\underline A_j,\vs)}\mathbb I[\inj]\,dF(\vs)
	=\int_{\vs}v_{g(A^*,\vs)}\mathbb I[\inj]\,dF(\vs)
	\geq\int_{\vs}(\underu-b_{g(A^*,\vs)})\mathbb I[\inj]\,dF(\vs).
\end{equation*}
The righthand expression is the bias difference conditioned on $\inj$, which is at least $\bdif(j).$ 
This contradicts our assumption that no threshold approximates $\bdif(j)$ to a factor better than $2$.

%
%

We will now lowerbound $f(\underline A_j)$ by decomposing it into two terms, depending on the event $\inj$:
\begin{equation}\label{eq:threshlb}
	f(\underline A_j)=\int_{\vs}v_{g(\underline A_j,\vs)}\,dF(\vs)
	=\int_{\vs}v_{g(\underline A_j,\vs)}\mathbb I[\inj]\,dF(\vs)+\int_{\vs}v_{g(\underline A_j,\vs)}\mathbb I[\outj]\,dF(\vs)
\end{equation}
To lowerbound the first term the right of (\ref{eq:threshlb}), define the event $\insidej$ to be the event that $g(\underline A_j,\mathbf v)\neq 0$, and $\agreej$ to be the event that the agent's favorite action in $A^*(j)$ is also their favorite in $\underline A_j$. In $\agreej$, the agent may still ultimately take the outside option.
We may now write:
\begin{equation*}
	\int_{\vs}v_{g(\underline A_j,\vs)}\mathbb I[\inj]\,dF(\vs)=\int_{\vs}v_{g(\underline A_j,\vs)}\mathbb I[\inj\cap\insidej\cap\agreej]\,dF(\vs)
	\geq \int_{\vs}(\underu-b_{g(A^*,\vs)})\mathbb I[\inj\cap\insidej\cap\agreej]\,dF(\vs),
\end{equation*}
where the inequality comes from the fact that under $\inj\cap\insidej\cap\agreej$, the agent chooses the same thing from $A^*$ as they would from $\underline A_j$, and hence that action gives the agent utility at least $\underu$.

To lowerbound the second term on the right of (\ref{eq:threshlb}), let $\survj$ be the event that the agent prefers $a(t_j)$ to the outside option, i.e.\  $v_0+b_0\leq v_{a(t_j)}+b_{a(t_j)}$. We may then lowerbound the second term as:
\begin{equation*}
	\int_{\vs}v_{g(\underline A_j,\vs)}\mathbb I[\outj\cap\insidej]\,dF(\vs)\geq \int_{\vs}(\underu-t_j)\mathbb I[\outj\cap\insidej]\,dF(\vs)
	\geq(\underu-t_j)\pr[\outj]\pr[\survj].
\end{equation*}
The first inequality follows from the fact that the action chosen in $\underline A_j$ yields utility at least $v_{a(t_j)}+t_j$ and has bias at most $t_j$. The second inequality follows from the facts that $\survj\leq\insidej$ and $\survj$ and $\inj$ are independent. Taking the two terms together, we have the lower bound
\begin{equation}\label{eq:bothlower}
	f(\underline A_j)\geq \int_{\vs}(\underu-b_{g(A^*,\vs)})\mathbb I[\inj\cap\insidej\cap\agreej]\,dF(\vs)+(\underu-t_j)\pr[\outj]\pr[\survj].
\end{equation} 

\begin{figure}[b]
	\caption{List of events for proof of Theorem~\ref{thm:param}.}
	\begin{itemize}
		\item $\inj$: agent's favorite non-$0$ action in $A^*$ is in $A^*(j)$.
		\item $\insidej$: $g(\underline A_j,\vs)\neq 0$.
		\item $\agreej$: the agent's favorite action in $A^*(j)$ is also their favorite in $\underline A_j$.
		\item $\survj$: $v_0+b_0\leq v_{a(t_j)}+b_{a(t_j)}$.
		\item $\optsur$: $g(A^*,\vs)\neq 0$
	\end{itemize}
\end{figure}

\paragraph{Upper Bounding $\bdif(j)$} Next, we upperbound the contribution of $A^*(j)$ to $\bdif$. We will split $\bdif(j)$ based on the event $\agreej$ that the agent's favorite action is the same between $A^*(j)$ and $\underline A_j$. Let $\optsur$ denote the event that $g(A^*,\vs)\neq 0$. We have:
\begin{align}
	\bdif(j)&=\int_{\vs}(\underu-b_{g(A^*,\vs)})\mathbb I[\inj\cap \optsur]\,dF(\vs)\notag\\
	&=\int_{\vs}(\underu-b_{g(A^*,\vs)})\mathbb I[\inj\cap \optsur\cap \agreej]\,dF(\vs)+\int_{\vs}(\underu-b_{g(A^*,\vs)})\mathbb I[\inj\cap\optsur\cap \disj]\,dF(\vs).\label{eq:bdifub}
\end{align}

We can now upperbound each term in (\ref{eq:bdifub}), starting by rewriting the leftmost. In  the intersection event $\inj\cap\agreej$, the favorite non-$0$ action from $A^*$ and $\underline A_j$ are the same. Hence conditioned on this event, $g(\underline A_j,\mathbf v)=g(A^*,\mathbf v)
$ and $\optsur=\insidej$. Therefore:
\begin{equation*}
	\int_{\vs}(\underu-b_{g(A^*,\vs)})\mathbb I[\inj\cap \optsur\cap \agreej]\,dF(\vs)
	=\int_{\vs}(\underu-b_{g(A^*,\vs)})\mathbb I[\inj\cap \insidej\cap \agreej]\,dF(\vs).
\end{equation*}
To upperbound the second term in (\ref{eq:bdifub}), note first that in the event $\inj\cap\optsur$, the chosen action's bias is at least $t_{j-1}$ and hence the bias difference is at most $\underu-t_{j-1}$.
Second, note that conditioned on $\inj\cap \disj$, it must be that the agent prefers $a(t_j)$ to all actions in $A^*(j)$. Hence, conditioned on $\inj\cap \disj$, it must be that any time the agent prefers an action in $A^*(j)$ to the outside option, it must be that they also prefer $a(t_j)$. Hence, $\inj\cap\optsur\cap \disj\subseteq \inj\cap\survj\cap \disj$.
Finally, note  that $\inj$ and $\survj$ are independent.
These facts imply:
\begin{align*}
	\int_{\vs}(\underu-b_{g(A^*,\vs)})\mathbb I[\inj\cap\optsur\cap \disj]\,dF(\vs)
	&\leq(\underu-t_{j-1})\int_{\vs}\mathbb I[\inj\cap\optsur\cap \disj]\,dF(\vs)\\
	&\leq(\underu-t_{j-1})\int_{\vs}\mathbb I[\inj\cap\survj\cap \disj]\,dF(\vs)\\
	&\leq(\underu-t_{j-1})\pr[\inj\cap \survj]\\
	&=(\underu-t_{j-1})\pr[\inj]\pr[ \survj].
\end{align*}
Combining the two terms, we have:
\begin{equation}\label{eq:bothupper}
	\bdif(j)\leq \int_{\vs}(\underu-b_{g(A^*,\vs)})\mathbb I[\inj\cap \insidej\cap \agreej]\,dF(\vs)+(\underu-t_{j-1})\pr[\inj]\pr[ \survj].
\end{equation}

\paragraph{Lowerbounding Bias Differences}
Inequality (\ref{eq:bothlower}) lowerbounds $f(\underline A_j)$ in terms of $t_j$, and inequality (\ref{eq:bothupper}) upperbounds $\bdif(j)$ in terms of $t_{j-1}$. 
Since no threshold is better than a $2$-approximation to $\bdif(j)$, it must hold that our upper bound on $\bdif(j)$ exceeds our lower bound on $f(\underline A_j)$. That is:
\begin{align*}
&	\int_{\vs}(\underu-b_{g(A^*,\vs)})\mathbb I[\inj\cap\insidej\cap\agreej]\,dF(\vs)+(\underu-t_j)\pr[\outj]\pr[\survj]\\
&\quad\quad\quad\leq \int_{\vs}(\underu-b_{g(A^*,\vs)})\mathbb I[\inj\cap \insidej\cap \agreej]\,dF(\vs)+(\underu-t_{j-1})\pr[\inj]\pr[ \survj].
\end{align*}
We may rearrange this as 
\begin{equation*}
	\frac{\underu-t_{j-1}}{\underu-t_j}\geq \frac{1-\pr[\inj]}{\pr[\inj]}.
\end{equation*}
Taking the product over all $j\leq \alpha-1$ and canceling yields:
\begin{equation*}
	\frac{\underu-t_{0}}{\underu-t_{\alpha-1}}\geq \prod_{j=1}^{\alpha-1}\frac{1-\pr[\inj]}{\pr[\inj]}.
\end{equation*}
Note that the righthand side is a convex, symmetric function of the $\pr[\inj]$s. Moreover, we have $\sum_{j=1}^{\alpha-1}\pr[\inj]\leq \sum_{j=1}^{\alpha}\pr[\inj]=1$. Hence, minimizing the righthand side as a function of the $\pr[\inj]$s yields a minimum at $\pr[\inj]=1/(\alpha-1)$ for all $j$, and hence $(\underu-t_{0})/(\underu-t_{\alpha-1})\geq (\alpha-2)^{\alpha-1}$. Note also that $\bdif(\alpha)\geq (\bdif-\bdif^0)/2\alpha$. Since $\bdif(\alpha)\leq (\underu-t_{\alpha-1})\pr[\mathcal E_{\alpha}]\leq(\underu-t_{\alpha-1})$, we obtain:
\begin{equation*}
	\underu-t_{0}\geq \frac{(\alpha-2)^{\alpha-1}}{2\alpha}(\bdif-\bdif^0).
\end{equation*}
Since every action in $A^*$ is selected with positive probability, and since $t_0=\min_{i\in A^*} b_i$, it must be that some action in $A^*$ has value at least  $\underu-t_{0}$ with positive probability. Since $\bdif-\bdif^0\geq \bdif/2\geq \opt/4$, we obtain the desired lower bound on $\vmax/\opt$.

\end{proof}

\section{Threshold Delegation with Correlated Values}
\label{sec:res_distortion}
In the previous sections, we showed that under independently-distributed values, simple threshold rules obtain a close approximation the optimal principal utility. We now allow arbitrarily correlated values and show that the situation worsens considerably. Assuming the value distribution is discrete, prove tight a approximation guarantee for the principal's best threshold policy, showing that it is a $\Theta(\log \pmin^{-1})$-approximation, where $\pmin$ denotes the lowest probability mass of any value profile realization. Hence, absent independence, threshold policies still perform well under low levels of uncertainty, but their performance gradually degrades as the uncertainty grows more extreme. We state our results formally below, starting with our upper bound.

\begin{theorem}\label{thm:uniform}
	
	There always exists a threshold policy which is a $4\log(\pmin^{-1})$-approximation
	where $\pmin$ 
	is the mass of the least likely value profile.
\end{theorem}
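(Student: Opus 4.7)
The plan is to invoke the decomposition $\opt = \sur + \bdif$ from Lemma~\ref{lem:decom}, note that Lemma~\ref{lem:sur} already hands us a threshold ($\amax$) recovering $\sur$, and then show that some threshold recovers a $1/O(\log\pmin^{-1})$ fraction of $\bdif$. This reduces the task to building a good threshold for the bias-difference term under the correlated distribution. The conversion from Lemma~\ref{lem:lb} used in Section~\ref{sec:indep} is no longer available (that is where independence was used), so instead I would analyze a threshold pointwise on the value profiles.

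Let $Y(\vs) = \underu - b_{g(A^*,\vs)} \geq 0$, so $\bdif = \ex{Y}$. The key consequence of $\pmin$ being bounded below is that no single profile can be too influential: the profile $\vs^\star$ achieving $Y_{\max}$ contributes at least $\pmin \cdot Y_{\max}$ to $\bdif$, so $Y_{\max} \leq \bdif/\pmin$. Moreover, profiles with $Y(\vs) < \bdif/2$ together contribute at most $\bdif/2$ to $\bdif$, so the "heavy" profiles with $Y(\vs) \in [\bdif/2,\,\bdif/\pmin]$ account for at least $\bdif/2$ of $\bdif$. I would then partition this range into doubling buckets $B_k = \{\vs : Y(\vs) \in [\bdif\cdot 2^{k-1}, \bdif\cdot 2^k)\}$ for $k=0,1,\dots,K$ with $K = \lceil\log_2(2/\pmin)\rceil = O(\log\pmin^{-1})$, and pigeonhole to find a bucket $k^\star$ contributing at least $\bdif/O(\log\pmin^{-1})$ to $\bdif$.

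For that bucket, let $t_{k^\star} = \underu - \bdif\cdot 2^{k^\star-1}$. For every $\vs\in B_{k^\star}$, $g(A^*,\vs)$ has bias at most $t_{k^\star}$, so it lies in $A_{t_{k^\star}}$; because the agent's selection from $A^*\cup\{0\}$ yields utility at least $\underu$ (by definition of $\underu$), the agent's selection from $A_{t_{k^\star}}\cup\{0\}$ also achieves utility $\geq \underu$, forcing
\[
v_{g(A_{t_{k^\star}},\vs)} \;\geq\; \underu - b_{g(A_{t_{k^\star}},\vs)} \;\geq\; \underu - t_{k^\star} \;=\; \bdif \cdot 2^{k^\star-1}
\]
pointwise on $B_{k^\star}$. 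Since the $Y(\vs)$-values in $B_{k^\star}$ are at most $2\cdot(\underu-t_{k^\star})$, integrating gives $f(A_{t_{k^\star}}) \geq \tfrac12\cdot\text{contribution}(B_{k^\star}) \geq \bdif / O(\log\pmin^{-1})$. Combining with $f(\amax)\geq \sur$ via $\max(\sur,\bdif/L) \geq (\sur+\bdif)/(L+1) = \opt/(L+1)$ closes the proof with $L = O(\log\pmin^{-1})$; a careful accounting of the factor-$2$ bucketing loss and the doubling-base $\log_2(2/\pmin) = 1+\log_2 \pmin^{-1}$ is what produces the stated constant $4$.

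The main obstacle is the behavior of the agent when the optimum selects the outside option or when $A_{t_{k^\star}}$ contains actions outside $A^*$ that might lure the agent to a low-value choice. For the outside-option case ($g(A^*,\vs)=0$), I would handle the contribution $A^{*0}$ separately using Lemma~\ref{lem:single}, which gives a single-action threshold attaining $A^{*0}$ and can be absorbed into the $\sur$ bound. For the second concern, notice that the pointwise inequality $v_{g(A_{t_{k^\star}},\vs)} \geq \underu - b_{g(A_{t_{k^\star}},\vs)}$ only uses the agent's utility level and the bias ceiling $t_{k^\star}$, not the identity of the selected action; thus spurious actions drawn into the threshold cannot degrade the bound below $\underu - t_{k^\star}$, and the bucketing argument goes through cleanly.
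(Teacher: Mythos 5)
Your proposal is correct in the same scope as the paper's argument, but it takes a genuinely different route. The paper works directly with the random bias $B$ of the action chosen by the optimal set: it recursively splits the (conditional) distribution of $B$ at a median threshold, shows that the two thresholds bounding each interval together $4$-approximate that interval's contribution to $\opt$ (one threshold caps the pointwise loss by $t_0-t_1$, the other guarantees value $t_0-t_1$), and bounds the recursion depth by $\log_2 \pmin^{-1}$ because the residual probability mass halves at each step and no atom is smaller than $\pmin$. You instead recycle the Section~\ref{sec:indep} machinery: $\opt=\sur+\bdif$ with $\sur$ covered by $\amax$, then a geometric bucketing of $Y=\underu-b_{g(A^*,\vs)}$, where $\pmin$ enters through the dynamic-range bound $Y_{\max}\le \bdif/\pmin$ rather than through probability halving, and a single threshold per bucket recovers half of the heaviest bucket's contribution. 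Your version avoids the recursion/conditioning bookkeeping and makes the role of $\pmin$ more transparent; the price is an additive constant (you get roughly $4\log_2\pmin^{-1}+O(1)$ rather than a clean $4\log\pmin^{-1}$, since the $\sur$ term and the bucket-count rounding each add to the factor), which is immaterial at the stated $\Theta(\log\pmin^{-1})$ granularity. One caveat to be honest about: your pointwise step $v_{g(A_{t},\vs)}\ge \underu-t$ needs the agent's chosen option under the threshold menu to have bias at most $t$, which fails if a random outside option with $b_0>t$ lures the agent away; your patch via Lemma~\ref{lem:single} only covers profiles where the \emph{optimum} selects action $0$, not defections under the threshold menu. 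However, the paper's own proof of Theorem~\ref{thm:uniform} silently makes the same assumption (it treats the action chosen from $A_{t_1}$ as having bias at most $t_1$), so this is a shared scope restriction (no outside option, or one with low bias), not a gap specific to your argument.
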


\begin{proof}
Let $OPT$ be the optimal delegation set, and let $t_0$ be the maximum bias across actions in $OPT$. Let $B$ be the random variable that corresponds to the bias of the action chosen in $OPT$. Let $t_1$ be the bias threshold  such that $\Pr[B\in [t_1,t_0]] \leq \frac{1}{2}$ and $\Pr[B\in [0,t_1]] \geq \frac{1}{2}$.
Let $OPT([t_1,t_0])$ be the principal utility generated conditioned on $B\in [t_1,t_0]$.
 We claim that the principal utility generated by using a best threshold out $A_{t_1}$ or $A_{t_0}$ achieves principal utility at least $\frac{1}{4} OPT([t_1,t_0])$.
 
 First consider the threshold $A_{t_0}$. Consider any realization where $OPT$ picks an action with bias at least $t_0$. Let $v,b$ be the value and bias of the action chosen by $OPT$ and let $v',b'$ be the value and bias of the action chosen by $A_{t_0}$ respectively.   
 Since the action chosen by $OPT$ is available in $A_{t_0}$ it must be that 
 \begin{equation*}
     v'+b' \geq v+b \Leftrightarrow v-v' \leq b'-b.
 \end{equation*}
 Note that from our assumptions $b'\leq t_0$ and $b\geq t_1$, therefore the pointwise loss of $A_{t_0}$ compared to $OPT$ is at most $t_0-t_1$ in this event.
 As a result we can lower bound the principal utility of $A_{t_0}$ as follows:
 $$
f(A_{t_0}) \geq \Pr[B \in[t_1,t_0]] (OPT([t_1,t_0])  -  t_0+t_1) \geq  \frac{1}{2}(OPT([t_1,t_0]) - t_0+t_1).
 $$
 
 Second, consider the threshold $A_{t_1}$. Every time $OPT$ chooses an action with bias less than or equal to $t_1$ this action is also available to $A_{t_1}$. Note that if such action is chosen its agent utility must be at least $t_0$ otherwise the action with the maximum bias would have been chosen instead. The action chosen in $A_{t_1}$ therefore must have at least agent utility $t_0$. Since the bias is at most $t_1$ this means that the principal utility from that action is at least $t_1-t_0$. We conclude that
$$
f(A_{t_1}) \geq  (t_0-t_1)\Pr[B \in [0,t_1]  \geq \frac{1}{2}(t_0-t_1).
 $$
As a result,
$$
f(A_{t_0}) +f(A_{t_1}) \geq \frac{1}{2}
[ OPT([t_1,t_0])- t_0+t_1 +t_0-t_1] =\frac{1}{2} OPT([t_1,t_0])].
$$

Hence, the best out of both of these sets provides at least $\frac{1}{4}OPT([t_1,t_0])$ principal utility. Note that $OPT$ gets at most $\Pr[B\in [t_1,t_0]]OPT([t_1,t_0])$ utility from this event therefore the best of these threshold provides a $4$-approximation to the events' contribution to $OPT$'s principal utility.

We have shown that there exists a threshold that approximates the utility of $OPT$ conditioned that $B\in [t_1,t_0]$ that is
$$\Pr[B\in [t_1,t_0]]OPT([t_1,t_0]).$$
 Let us focus on the remaining principal utility that is obtained by $OPT$ in the event that $B\in [0,t_2]$ where $t_2$ bias of the most-biased action smaller than $t_1$. One key observation is that this event happens with probability at most $1/2$ by our choice of $t_1$ since $\Pr[B\in [t_1,t_0]]\geq \frac{1}{2}$.
 
 If we consider the conditional distributions on this event we can repeat the same analysis to prove that there exists bias threshold $t_3<t_2$ such that
 $Prob[B\in [t_3,t_2)\mid B\in [0,t_2] ]\geq 1/2$ and also 
 $$\max\{f(A_{t_2}),f(A_{t_3})\} \geq \frac{1}{4} \Pr[B\in [0,t_2]] OPT([t_3,t_2]) \geq \frac{1}{4} \Pr[B\in [t_3,t_2]] OPT([t_3,t_2]), $$ which corresponds to $4$-approximation to the contribution to $OPT$ solution's principal utility in this interval. Note that   
 
$$\Pr[B\in [0,t_3)\mid B\in [0,t_2] ]\leq 1/2 \Rightarrow \Pr[B\in [0,t_3] ]\leq 1/4.$$
Repeating this process shrinks the probability of the remaining probability space by half. Let $m$ be the maximum number of times we can repeat this process. There are two ways this process can stop. Either we are left with a single action or the probability that $B$ ($OPT$ bias) is strictly below the last used threshold is $0$. Since the minimum probability of any realization is $p_{min}$ and each time we repeat this process the probability is shrunk by half $m$ cannot be larger than $\log{p_{min}^{-1}}$.

 This process generates $m$ disjoint events $B\in [t_{2i+1},t_{2i}]$ for $i=0,\dots,m-1$ such that

$$
OPT= \sum_{i=0}^{m-1} OPT([t_{2i+1},t_{2i}])Prob[B\in [t_{2i+1},t_{2i}]]
$$
and in addition 
$$
\max \{ f(A_{t_{2i+1}},A_{t_{2i}}\} \geq\frac{1}{4}
OPT([t_{2i+1},t_{2i}])\Pr[B\in [t_{2i+1},t_{2i}]]
$$
If we combined these two equations together we get that 
$$
\max_{i\in \{0,\dots,2m-1\}} f(A_{t_i}) \geq \frac{m}{4} OPT
$$
Since $m\leq \log{p_{min}^{-1}}$ we get that the best possible threshold is at least a $4\log{p_{min}^{-1}}$ approximation.


\end{proof}

\paragraph{Matching lower bound.}  
The above analysis is tight. We show in Appendix~\ref{supp:lognlb} that our analysis in Theorem~\ref{thm:uniform} is tight up to a constant factor. We do so by providing instances where no threshold policy can outperform the logarithmic approximation ratio.

\begin{theorem}\label{thm:lognlb}
	There exists a family of instances where no threshold policy is better than a $\Omega(\log \pmin^{-1})$-approximation.
\end{theorem}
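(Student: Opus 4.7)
The plan is to match the upper bound of Theorem~\ref{thm:uniform} by constructing, for each $n$, an instance whose optimal delegation utility decomposes into $\Theta(n)$ geometrically-shrinking bias ``levels'', each contributing $\Theta(1)$ to $\opt$, and where every threshold policy can recover the contribution of at most one level. Taking $\pmin=2^{-\Theta(n)}$ then yields the desired $\Omega(\log \pmin^{-1})$ gap.

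Concretely, I would use $O(n)$ actions in two roles. \emph{Primary} actions $P_1,\dots,P_n$ with well-separated biases $b(P_1)<\dots<b(P_n)$ carry the value that $\opt$ extracts, while \emph{distractor} actions $D_1,\dots,D_{n-1}$ with biases interleaved strictly between consecutive primaries force every nontrivial threshold to either drop a primary or admit a distractor. In particular, the set $\{P_1,\dots,P_n\}$ (which will be optimal) is \emph{not} a threshold, since any threshold containing $P_n$ must also contain all $D_j$'s. There are $n$ states $s_1,\dots,s_n$ with $\Pr[s_i]=2^{-i}$ for $i<n$ and $\Pr[s_n]=2^{-(n-1)}$, giving $\pmin=2^{-(n-1)}$ and $\log \pmin^{-1}=\Theta(n)$. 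Values are correlated within each state: in $s_i$, primary $P_i$ has a ``high'' value $V_i$ chosen so the agent selects $P_i$ under $\opt$, while the other primaries have value $0$; each distractor $D_j$ is assigned a state-dependent value calibrated so its agent-utility $v(D_j)+b(D_j)$ outbids $V_i+b(P_i)$ whenever $D_j$ is in a threshold, but the principal-side value $v(D_j)$ delivered when $D_j$ is chosen is only $O(1)$.

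To finish, I would verify two claims. First, $\{P_1,\dots,P_n\}$ delivers utility $\Theta(n)$: choosing $V_i=\Theta(2^i)$ makes each state's contribution to $\opt$ equal to $\Theta(1)$, for a total of $\Theta(n)$. Second, every threshold delivers utility $O(1)$. There are at most $O(n)$ thresholds, and one case-splits: a threshold that truncates at some primary $P_k$ misses all value from states $s_i$ with $i>k$ (since $P_i$ is unavailable, agent picks $P_k$ with value $0$), and captures from states $s_i$ with $i\le k$ only contributions already bounded by $O(1)$ once the interleaved distractors are accounted for; a threshold that truncates at a distractor $D_k$ is dominated by that distractor in most states, capping the principal's per-state payoff at $O(1)$. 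Summing across states yields $f(A_t)=O(1)$ for every $t$, so $\opt/\max_t f(A_t)=\Omega(n)=\Omega(\log \pmin^{-1})$.

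The main obstacle is the joint calibration of values. The distractor values $v(D_j)$ in each state must be simultaneously (a)~large enough, together with $b(D_j)$, to beat $V_i+b(P_i)$ for the relevant pairs $(i,j)$ so that the distractor actually hijacks the agent's choice in threshold sets, yet (b)~small enough that the principal receives only $O(1)$ whenever $D_j$ is picked, and moreover (c)~the $V_i$'s must be large enough that under $\{P_1,\dots,P_n\}$ the agent prefers $P_i$ over every other primary $P_j$ in state $s_i$ (whose utility is $b(P_j)$ since $v(P_j)=0$). Reconciling (a)--(c) with the exponential decay $\Pr[s_i]=2^{-i}$ so that every level contributes exactly $\Theta(1)$ to $\opt$ is the technical heart of the construction, and is what the explicit instance in Appendix~\ref{supp:lognlb} works out in detail.
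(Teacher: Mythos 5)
Your high-level plan (geometrically scaled ``levels'' each contributing $\Theta(1)$ to $\opt$, interleaved distractor actions so that the optimal set is not a threshold, and $\pmin$ exponentially small in the number of levels) is the right flavor, but the specific calibration you propose is not just technically deferred --- it is impossible, because you have the correlation between value and bias backwards. In your construction the high-value, low-probability primaries sit at \emph{high} bias ($b(P_1)<\dots<b(P_n)$ with $V_i=\Theta(2^i)$, $\Pr[s_i]=2^{-i}$). Your own requirement (c) applied to the lowest level forces $V_1+b(P_1)\ge b(P_n)$, i.e.\ $b(P_n)-b(P_1)\le V_1=O(1)$, so \emph{all} primary (and hence all interleaved distractor) biases lie in an interval of constant length. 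Now take the grand threshold $A_{b(P_n)}$, which contains every action: in state $s_i$ the agent's choice $a$ satisfies $v_a+b_a\ge V_i+b(P_i)$, hence $v_a\ge V_i-\bigl(b_a-b(P_i)\bigr)\ge V_i-O(1)$, and summing, $f(A_{b(P_n)})\ge\sum_i 2^{-i}\bigl(V_i-O(1)\bigr)=\opt-O(1)=\Theta(n)$. So one threshold is already within an additive constant of $\opt$, and no $\Omega(n)$ (hence no $\Omega(\log\pmin^{-1})$) gap is possible. Equivalently, your requirements (a) and (b) are incompatible: any distractor that outbids $P_i$ in state $s_i$ must have value at least $V_i+b(P_i)-b(D_j)\ge V_i-O(1)=\Theta(2^i)$, not $O(1)$. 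Dropping requirement (c) for the first few levels does not help: if $i^*$ is the smallest contributing level, the same argument bounds the bias spread above $b(P_{i^*})$ by $V_{i^*}$, and the grand threshold's total loss on levels $i\ge i^*$ is at most $V_{i^*}\sum_{i\ge i^*}2^{-i}=O(1)$.

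The fix is to reverse the orientation, which is exactly what the paper's construction in Appendix~\ref{supp:lognlb} does: the \emph{largest} value sits at the \emph{lowest} bias, values halve as biases increase (value $2^{k-i}$ at bias $2^k-2^{k-i}$), and the probability mass of a level doubles as its value halves (realized there as a uniform distribution over $2^k-1$ columns, with level $i$ occupying $2^i$ columns), so each level still contributes $\Theta(1)$ while every optimal choice gives the agent utility exactly $2^k$. The interleaved even-indexed actions have bias just below the next odd action and agent utility just above $2^k$, so any threshold cascades the agent down to the cheapest included distractor; this is why the distractor the agent actually takes is worth only the bottom level's value, rather than $O(1)$ by fiat. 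In your orientation the bias gap above a valuable action is necessarily tiny (capped by the cheapest level's value), so hijacking can never cost much; in the paper's orientation the gap above the value-$2^{k-i}$ action is $\approx 2^{k-i}$, its full contribution, which is what makes every threshold lose all but $\Theta(1)$ of $\opt$.
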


In Appendix~\ref{app:hardness}, we also prove the following supplementary hardness result for the case with discretely-distributed, correlated values:
\begin{theorem}
	\label{thm:APX_hardness}
	With correlated, discrete values, there exists a constant $c$ such that it is NP-hard to compute a mechanism with approximation factor better than $c$.
\end{theorem}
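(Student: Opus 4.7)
The plan is to establish APX-hardness through a gap-preserving reduction from a canonical APX-hard problem, such as Max-E3-SAT(B) (3-SAT with each variable occurring in at most $B$ clauses, for some fixed constant $B$). By the PCP theorem, this problem has a constant inapproximability gap, so any reduction in which the approximation ratio degrades by only a constant factor suffices for our purposes. Note that the independence-preserving hardness reduction from Theorem~\ref{thm:indephard} is not sufficient, as it encodes \textsc{Integer Partition}, which admits a FPTAS; the extra freedom afforded by correlation is what enables APX-level hardness.

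Given an instance $\varphi$ with variables $x_1,\ldots,x_n$ and clauses $C_1,\ldots,C_m$, I would construct a delegation instance whose actions correspond to the $2n$ literals of $\varphi$, together with an auxiliary ``consistency gadget'' per variable, and whose correlated value distribution is supported on $O(n+m)$ profiles. For each clause $C_j$, a \emph{clause profile} $P_j$ assigns large value to exactly those literal actions that satisfy $C_j$, so that the principal earns positive utility from $P_j$ iff the menu contains at least one satisfying literal. For each variable $x_i$, a \emph{variable profile} $V_i$ uses the gadget together with appropriately-calibrated biases so that a menu containing both $a_{x_i}$ and $a_{\neg x_i}$ induces the agent to choose a low-value action for the principal, whereas a menu containing only one of these literals delivers a large value. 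The probabilities of the two families of profiles are tuned so that variable consistency and clause satisfaction contribute comparably to $f(A)$.

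The analysis then proceeds in two directions. For completeness, a truth assignment satisfying $(1-\delta)m$ clauses yields the menu containing precisely the true literals, whose expected utility is within a $(1-O(\delta))$ factor of the maximum achievable. For soundness, any menu $A$ induces a truth assignment by setting $x_i$ to true if only $a_{x_i}\in A$, false if only $a_{\neg x_i}\in A$, and arbitrarily otherwise. The variable-profile contributions to $f(A)$ bound the number of inconsistent variables, while the clause-profile contributions bound the number of satisfied clauses under the induced assignment, each up to a constant factor. Hence a hypothetical $(1+\epsilon)$-approximation for delegation would yield a $(1+O(\epsilon))$-approximation for Max-E3-SAT(B), contradicting its APX-hardness.

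The main obstacle is the design of the variable-profile gadget. Because the delegation model does not admit explicit exclusion constraints, the penalty for including both literals of a variable must arise from the agent's own incentives: in $V_i$ the agent must be lured into a poorly-valued action exactly when both $a_{x_i}$ and $a_{\neg x_i}$ are present, and into a high-value literal action when only one is. Calibrating the bias and value magnitudes across clause and variable profiles --- so that the agent's selfish tie-breaking behaves consistently and the approximation gap blows up by at most a constant --- is the delicate step, and it is precisely the correlated (non-product) structure of $F$ that affords the flexibility to do so.
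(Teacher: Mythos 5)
Your proposal is a plan for a different reduction (from Max-E3-SAT($B$)), but it has a genuine gap at exactly the point you flag as ``the delicate step,'' and as stated that step cannot work. You ask for a single variable profile $V_i$ in which the agent is lured to a low-value action precisely when \emph{both} $a_{x_i}$ and $a_{\neg x_i}$ are in the menu, but picks a high-value literal when only one is. This is impossible in the delegation model: fix any value profile and any set $R$ of other offered actions. The agent's choice from $R\cup\{a_{x_i},a_{\neg x_i}\}$ is its utility-maximizing action there, and that action already belongs to $R\cup\{a_{x_i}\}$ or to $R\cup\{a_{\neg x_i}\}$; hence, pointwise in the profile, the principal's payoff from the menu containing both literals \emph{equals} the payoff from one of the two single-literal menus. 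So no single profile can reward each singleton while punishing the pair. A consistency penalty can only be created in expectation across at least two correlated profiles per variable (e.g., one in which $a_{x_i}$ is the lure and $a_{\neg x_i}$ is valuable, and one with the roles reversed), and even then you must specify values, biases, profile probabilities, the treatment of variables with neither literal offered, and verify both directions of the gap --- none of which the proposal does. As written it is an outline with an unrealizable gadget at its core, not a proof.

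For comparison, the paper avoids literal-consistency gadgets entirely by reducing from vertex cover on bounded-degree graphs (APX-hard). There is one zero-bias action per node plus a ``default'' action with bias $-1$; the correlated distribution picks a uniformly random edge or node. An edge profile gives value $5$ to its two endpoint actions (versus a smaller default payoff), so the principal wants at least one endpoint in the menu; a node profile gives value $2$ to that node's action versus a better default outcome, so including a node costs a little. One then shows the optimal menu is exactly a minimum vertex cover plus the default, with value $(5\tilde m+3\tilde n-\tilde k)/(\tilde m+\tilde n)$, and since $\tilde m=\Theta(\tilde n)$ and the minimum cover has size $\Omega(\tilde n)$, a constant-factor gap for vertex cover becomes a constant-factor gap for delegation. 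Note that this construction only ever needs ``include it or not'' incentives of the monotone kind described above, which is why it sidesteps the obstruction your gadget runs into. Your SAT route might be repairable with a two-profile-per-variable gadget, but that construction and its soundness analysis are precisely what is missing.
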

To prove this result, we reduce from bounded-degree vertex cover, which is similarly hard to approximate.

\subsection*{Acknowledgments}
The authors thank Bobby Kleinberg, Hamsa Bastani, Rediet Abebe, and Shaddin Dughmi for helpful discussions. 
The authors further thank Yuanzhe Liu, Yichi Zhang, and Pucheng Xiong for their help as undergraduate research assistants on this and related projects.
Part of this work was done while Ali Khodabakhsh and Emmanouil Pountourakis were visiting the Simons Institute for the Theory of Computing.

\bibliographystyle{ACM-Reference-Format}
\bibliography{sample-bibliography}

\appendix
\section{Supplementary materials}

\subsection{Proof of Theorem~\ref{thm:indephard}}
\label{app:indephard}

To see that the problem is in NP, note that given an action set $S$, we may compute the principal's expected utility in polynomial time. Specifically, let $S_i$ be the first $i$ elements of $S$, let $\textsc{opt}_i(u)=\mathbb E[v_{g(S_i,\mathbf v)}~|~v_{g(S_i,\mathbf v)}+b_{g(S_i,\mathbf v)}=u]$ and $p_i(u)=\text{Pr}[v_{g(S_i,\mathbf v)}+b_{g(S_i,\mathbf v)}=u]$. Both can be computed for all $u$ by a simple dynamic program, considering $i=1,\ldots,n$. The utility of $S$ can then be computed using the law of total expectation.

To show hardness, we will reduce from \textsc{Integer Partition}.
An instance of this problem is integers $c_1,\ldots,c_n$. 
The goal is to find a subset $S\subseteq [n]$ such that $\sum_{i\in S}c_i=\tfrac{1}{2}\sum_{i=1}^n c_i$. 
Let $C=\sum_{i=1}^n c_i$, and $c_{\max}=\max_i c_i$.
Consider the following delegation instance, with $n+1$ actions. 
\begin{itemize}
	\item Actions $1,\ldots,n$ have bias $M^2(1-C/2M)$ for some $M$ (large, to be chosen). Let $\delta$ be a number small enough to only matter for agent tiebreaking (and which we will omit from all principal utility computations). The value of action $i$ will be
	\begin{itemize}
		\item[*] {\em (high realization)} $1+2\delta$ with probability $p_i=\tfrac{c_i}{M^3}-\tfrac{c_i^2}{2M^4(1-C/2M)}$
		\item[*] {\em (low realization)} $1$ with probability $q_i=\tfrac{c_i}{M}$
	\end{itemize}

	\item Action $n+1$ has value $0$ with probability $1/2$ and with probability $1/2$ takes value $M^2(1-C/2M)+1+\delta$. Action $n+1$ will have bias $0$.
\end{itemize}

First observe that any set of actions not containing action $n+1$ is suboptimal.
By a union bound, the utility from such a set is at most $\sum_{i=1}^n (p_i+q_i)\leq 2C/M$.
The utility from taking action $n+1$ alone, meanwhile, is $M^2(1-C/2M)+1\geq 1$, which is at least $2C/M$ as long as $M\geq 2C$.
It follows that the principal's problem is to pick which of actions $1,\ldots,n$ to pick alongside $n+1$.
Now consider the principal utility from a set $T=S\cup\{n+1\}$ for some $S\subseteq[n]$. To compute the principal's utility, consider the following two events:
\begin{itemize}
	\item Let $\mathcal E_1$ be the event that at least one action in $S$ has a high realization. In this event, the agent will choose such an action over $n+1$, no matter the value of action $n+1$. 
	
	We can approximate the probability of this event using only first-order terms. In more detail, this event has probability
	\begin{equation*}
		1-\prod_{i\in S}(1-p_i)=\sum_{i\in S}p_i-\sum_{k=2}^{|S|}(-1)^k\sum_{\substack{S_k\subseteq S:\\ |S_k|=k}}\prod_{j\in S_k}p_j
	\end{equation*}
Call the second term on the right $C_1$. We can show that $C_1\in[-\frac{4n^2c_{\max}^2}{M^6},\frac{4n^2c_{\max}^2}{M^6}]$:
\begin{align*}
	|C_1|&=
	\Bigg|\sum_{k=2}^{|S|}(-1)^k\sum_{\substack{S_k\subseteq S:\\ |S_k|=k}}\prod_{j\in S_k}p_j\Bigg|\\
	&\leq \sum_{k=2}^n\sum_{\substack{S_k\subseteq [n]:\\ |S_k|=k}}\prod_{j\in S_k}p_j\\
	&\leq \sum_{k=2}^n\binom{n}{k}\left(\frac{c_{\max}}{M^3}\right)^k\\
	&\leq \sum_{k=2}^n\left(\frac{ne}{k}\right)^k\left(\frac{c_{\max}}{M^3}\right)^k\\
	&\leq 2\sum_{k=2}^n\left(\frac{nc_{\max}}{M^3}\right)^k\\
	&\leq 2\sum_{k=2}^\infty\left(\frac{nc_{\max}}{M^3}\right)^k\\
	&=\frac{2n^2c_{\max}^2}{M^4}\sum_{k=0}^\infty\left(\frac{nc_{\max}}{M^3}\right)^k\\
	&=\frac{2n^2c_{\max}^2}{M^6}\frac{1}{1-\tfrac{nc_{\max}}{M^3}}.
\end{align*}
As long as $nc_{\max}/M^3\leq 1/2$, we have the desired upper bound.

\item Let $\mathcal E_2$ be the event that no action in $S$ has a high realization, and at least one has a low realization. Then this event has probability:
\begin{align*}
	&\prob[\overline{\mathcal E_1}]-\prod_{i\in S}(1-p_i-q_i)\\
	&=1-\sum_{i\in S}p_i+C_1-\prod_{i\in S}(1-p_i-q_i)\\
	&=1-\sum_{i\in S}p_i+C_1-1+\sum_{i\in S}(p_i+q_i)-\sum_{i\neq j\in S}(p_i+q_i)(p_j+q_j)+\sum_{k=3}^{|S|}(-1)^k\sum_{\substack{S_k\subseteq S:\\ |S_k|=k}}\prod_{j\in S_k}(p_j+q_j)\\
	&=C_1+\sum_{i\in S}q_i-\sum_{i\neq j\in S}(p_i+q_i)(p_j+q_j)+\sum_{k=3}^{|S|}(-1)^k\sum_{\substack{S_k\subseteq S:\\ |S_k|=k}}\prod_{j\in S_k}(p_j+q_j).
\end{align*}
Call the last term $C_2$. A similar argument to the one for $C_1$ shows that $C_2\in [-\frac{16n^3c_{\max}^3}{M^3},\frac{16n^3c_{\max}^3}{M^3}]$. We include it below for completeness.
\begin{align*}
	|C_2|&=\Bigg|\sum_{k=3}^{|S|}(-1)^k\sum_{\substack{S_k\subseteq S:\\ |S_k|=k}}\prod_{j\in S_k}(p_j+q_j)\Bigg|\\
	&\leq \sum_{k=3}^{|S|}\sum_{\substack{S_k\subseteq S:\\ |S_k|=k}}\prod_{j\in S_k}(p_j+q_j)\\
	&\leq \sum_{k=3}^{n}\sum_{\substack{S_k\subseteq S:\\ |S_k|=k}}\prod_{j\in S_k}(p_j+q_j)\\
	&\leq \sum_{k=3}^{n}\binom{n}{k}\left(\frac{2c_{\max}}{M}\right)^k\\
	&\leq \sum_{k=3}^{n}\left(\frac{ne}{k}\right)^k\left(\frac{2c_{\max}}{M}\right)^k\\
	&\leq \sum_{k=3}^{n}\left(\frac{2nc_{\max}}{M}\right)^k\\
	&\leq\frac{8n^3c_{\max}^3}{M^3}\frac{1}{1-\tfrac{2nc_{\max}}{M}}
\end{align*}
This implies the desired bound as long as $M\geq 4nc_{\max}$.

The third term above can also be simplified:
\begin{equation*}
	\sum_{i\neq j\in S}(p_i+q_i)(p_j+q_j)=\sum_{i\neq j\in S}p_ip_j+2\sum_{i\neq j\in S}p_iq_j+\sum_{i\neq j\in S}q_iq_j.
\end{equation*}
Call the first two terms above $C_3$. Since for any $i$, $0\leq p_i\leq c_{\max}/M^3$ and $0\leq q_i\leq c_{\max}$, we have $C_3\in[0,\frac{3n^2c_{\max}}{M^4}]$. We therefore have $\prob[\mathcal E_2]=\sum_{i\in S}q_i-\sum_{i\neq j\in S}q_iq_j+C_1+C_2-C_3$.
\end{itemize}

\paragraph{Analyzing the Principal's Utility}
Given actions $S$, we can write the principal's utility as:
\begin{equation*}
	\frac{1}{2}\left((1+2\delta)\prob[\mathcal E_1]+(M^2(1-C/2M)+1+\delta)\prob[\overline{\mathcal E_1}]\right)+\frac{1}{2}(\prob[\mathcal E_2]+(1+2\delta)\prob[\mathcal E_1]).
\end{equation*}
Taking $\delta\rightarrow 0$, we obtain:
\begin{align*}
	&\frac{1}{2}\left(\prob[\mathcal E_1]+(M^2(1-C/2M)+1)\prob[\overline{\mathcal E_1}]\right)+\frac{1}{2}(\prob[\mathcal E_2]+\prob[\mathcal E_1]).\\
	&\quad\quad\quad\quad=\frac{1}{2}(M^2(1-C/2M)+1)+\frac{1}{2}(\prob[\mathcal E_2]+\prob[\mathcal E_1]-M^2(1-C/2M)\prob[\mathcal E_1])\\
	&\quad\quad\quad\quad=\frac{1}{2}(M^2(1-C/2M)+1)+\frac{1}{2}(\prob[\mathcal E_2]-M^2(1-C/2M)\prob[\mathcal E_1])+\prob[\mathcal E_1]/2.
\end{align*}

The first term does not depend on $S$, and the last term will turn out to be negligibly small. We next analyze the middle term, leaving out $C_1$, $C_2$, and $C_3$ for the moment.
\begin{align*}
	\prob[\mathcal E_2]-M^2(1-C/2M)\prob[\mathcal E_1]&\approx\sum_{i\in S}\frac{c_i}{M}-\sum_{i\neq j\in S}\frac{c_ic_j}{M^2}-\left(1-\frac{C}{2M}\right)\left(\sum_{i\in S}\frac{c_i}{M}-\sum_{i\in S}\frac{c_i^2}{2M^2(1-\tfrac{C}{2M})}\right)\\
	&=\frac{C}{2M}\sum_{i\in S}\frac{c_i}{M}-\sum_{i\neq j\in S}\frac{c_ic_j}{M^2}-\sum_{i\in S}\frac{c_i^2}{2M^2}\\
	&=\frac{1}{2M^2}\left(\sum_{i\in S}c_i+\sum_{j\notin S}c_j\right)\sum_{i\in S}c_i-\sum_{i\neq j\in S}\frac{c_ic_j}{M^2}-\sum_{i\in S}\frac{c_i^2}{2M^2}\\
	&=\frac{1}{2M^2}\left(\sum_{i\in S}c_i\right)\left(\sum_{j\notin S}c_i\right).
\end{align*}

This latter expression takes value $C^2/8M^2$ if the integers can be exactly partitioned, and value at most $(C^2/4-1)/2M^2=C^2/8M^2-1/2M^2$ otherwise. Now we can take $C_1$, $C_2$, $C_3$, and $\prob[\mathcal E_1]/2$ into account.
Specifically, we can write:
\begin{align*}
	&\Bigg|\frac{1}{2}(\prob[\mathcal E_2]-M^2(1-C/2M)\prob[\mathcal E_1])+\prob[\mathcal E_1]/2-\frac{1}{2M^2}\left(\sum_{i\in S}c_i\right)\left(\sum_{j\notin S}c_i\right)\Bigg|\\
	&\quad\quad\quad\quad=\Big|\frac{1}{2}(C_1+C_2-C_3)+\frac{M^2}{2}(1-C/2M)C_1+\prob[\mathcal E_1]/2\Big|\\	
	&\quad\quad\quad\quad\leq\frac{1}{2}\left(\frac{4n^2c_{\max}^2}{M^6}+\frac{16n^3c_{\max}^3}{M^3}+\frac{3n^2c_{\max}}{M^4}\right)+\frac{M^2}{2}(1-C/2M)\frac{4n^2c_{\max}^2}{M^6}+\frac{nc_{\max}}{2M^3}\\
	&\quad\quad\quad\quad\leq \frac{16n^3c_{\max}^3}{M^3}.
\end{align*}
The first inequality follows from applying the triangle inequality, along with our existing bounds on $C_1$, $C_2$, and $C_3$ and the fact that $\prob[\mathcal E_1]\leq \frac{nc_{\max}}{M^3}$ by a union bound. As long as $M\geq 128n^3c_{\max}^3$, we will have that $16n^3c_{\max}^3/M^3\leq 1/8M^2$. We can therefore solve our \textsc{Integer Partition} instance by asking if our constructed instance of delegation has value at least $(M^2(1-C/2M)+1)/2+C^2/8M^2-1/8M^2$. Any solution that exactly partitions the integers will obtain at least this value, and any solution that fails to do so will have objective value at most $(M^2(1-C/2M)+1)/2+C^2/8M^2-1/2M^2+1/8M^2\leq(M^2(1-C/2M)+1)/2+ C^2/8M^2-1/8M^2$.
\subsection{Proof of Theorem~\ref{thm:lognlb}}
\label{supp:lognlb}
In this appendix, we show that the logarithmic approximation upper bond of Theorem~\ref{thm:uniform}  is tight, up to a constant factor. That is, no threshold algorithm can perform better than $\log p_{min}^{-1}$.
To prove the tightness of our analysis in Section~\ref{sec:res_distortion}, we construct an infinite family of instances.


For any $k\geq 2$, consider an instance with $n=2k-1$ actions. The correlated distribution has $m=2^k-1$ value profile realizations. We construct a value matrix where each row correspond to an action and each column corresponds to a realization. Therefore, the value at cell $V_{i,j}$ gives the value of action $i$ at realization value profile $j$. The distribution over value profiles simply selects and value realization uniformly at random.

\footnotesize
\begin{align}
&V=\label{eq:log_up}\\
&\left[ 
\begin{array}{c|c|c|c|c|c|c|c|c|c|c}
2^k & & & & &&&&&&\\
\hline 
2^{k-1}+2\epsilon &  &  &  &  &&&&&&\\
\hline 
 & 
 2^{k-1} & 
 2^{k-1} &  &  &&&&&&\\
\hline 
2^{k-2}+3\epsilon & 2^{k-2}+3\epsilon & 2^{k-2}+3\epsilon & &&&&&&&\\
\hline 
 &  &  & 
 2^{k-2} & 
 2^{k-2} & 
 2^{k-2} & 
 2^{k-2} &&&&\\
\hline 
2^{k-3}+4\epsilon & 2^{k-3}+4\epsilon & 2^{k-3}+4\epsilon & 2^{k-3}+4\epsilon & 2^{k-3}+4\epsilon & 2^{k-3}+4\epsilon & 2^{k-3}+4\epsilon &&&&\\
\hline 
\vdots& \vdots & \vdots & \vdots &\vdots &\vdots&\vdots&\ddots&&&\\
\hline
2+O(k\epsilon)& 2+O(k\epsilon) & 2+O(k\epsilon) & 2+O(k\epsilon) & 2+O(k\epsilon)&2+O(k\epsilon)&2+O(k\epsilon)&\dots&&&\\
\hline
 && && &&&\dots&\undermat{2^{k-1}}{
2&
\dots&
2}
\end{array}\right]\notag\\
 & \notag
\end{align}
\normalsize

Note that all the empty entries in the above matrix are zero, and are removed to make the structure of the matrix more apparent. Also, 
 every solution corresponds to eliminating a  a set of rows. For each realization (column) the row with maximum agent utility is selected. 
 The optimal solution is to select set of odd actions (with size $k$). The colored entries indicate (realization,action) pairs that contribute to the optimal principal's utility ($OPT$). The optimal utility is equally divided between the odd rows, leaving $2^k$ for each one: the first row has $2^k$ in the first column, the third row has $2^{k-1}$ in columns 2 and 3, the fifth row has $2^{k-2}$ over the next $4$ columns and so on. In the example, the even actions are constructed to lower the principal's utility whenever they are included in a threshold solution.  In every state (column), we divide the colored utility by 2 to find the utility of the next row, and keep dividing by 2 to complete the subsequent even rows. The $\epsilon$ terms are added to break the ties and are of little importance.

Next, we define the bias: we set $b_1=0$, and the rest of actions have the following bias:
\begin{equation}
\label{eq:log_distortion}
b_{2i+1}=\sum_{j=1}^i 2^{k-j},\quad b_{2i}=b_{2i+1}-\epsilon, \qquad i\in \{1,...,k-1\}.
\end{equation}
 Now that all the parameters are set, it is easy to verify that given the set of odd actions (rows), the agent will indeed pick the colored entries. This generates the optimal utility, since it is optimal in every single realization. Since each value profiled is realized with probability $\frac{1}{m}$ the optimal expected utility is equal to: 
$$OPT=\frac{k\times 2^k}{m}.$$
However, the best threshold solution in the constructed instance is to allow the entire set of actions ($\Omega$).
To see this, assume that the principal allows actions with bias less than or equal to $b_{2\ell-1}$ for some $\ell\leq k$. (Thresholds set at even-indexed actions can be easily shown to be suboptimal.)
Note that every even action is preferred by the agent to any other action with less bias. Therefore, the only actions chosen by the agent are $2\ell -1$ or $2\ell -2$ 
In this case, the principal will get utility of $2^{k-\ell+1}+O(\ell\epsilon)$ from the first $2^\ell-1$ states, and zero from the remaining states. 

Observe that the overall utility $(2^\ell-1)\times 2^{k-\ell+1}$ is an increasing function in $\ell$, meaning that the best strategy for the principal is to not limit the agent.
In this case, the agent will pick the penultimate action in the first half of columns, and the last action for the second half, generating utility of (almost) 2 for principal in every state. More precisely, we have:
$$APX=2+O(k\epsilon).$$
We get the desired lower bound by dividing the above objectives:
$$\frac{OPT}{APX}=\frac{k\times 2^k}{2n+O(nk\epsilon)}\cong \frac{k}{2}\cong \frac{\log{n}}{2}.$$
\begin{example} In order to make sure that the above construction is clear, here we present the full matrices for the case of $k=3$, which translates into $n=5$ actions and $m=7$ realizations. The value matrix in this case is
\begin{equation*}
V=\left[ 
\begin{array}{c|c|c|c|c|c|c}
8 &0 &0 &0 &0 &0 &0\\
\hline 
4+2\epsilon &0  &0  &0  &0  &0 &0\\
\hline 
0 & 
4 & 
4 &0  &0  &0 &0\\
\hline 
2+3\epsilon & 2+3\epsilon & 2+3\epsilon &0 &0 &0 &0\\
\hline 
0 &0  &0  & 
2 & 
2 & 
2 & 
2 \end{array}\right ]
\end{equation*}
Calculating the bias in  \eqref{eq:log_distortion} results in:

$$\mathbf{b}=(0,4-\epsilon,4,6-\epsilon,6)$$
It is clear that the value matrix $V$ is non-negative, and the agent's utility $V+B$ will be:
\begin{equation*}
V+B=\left[ 
\begin{array}{c|c|c|c|c|c|c}
8 &0 &0 &0 &0 &0 &0\\
\hline 
8+\epsilon &4-\epsilon  &4-\epsilon  &4-\epsilon  &4-\epsilon  &4-\epsilon &4-\epsilon\\
\hline 
4 & 8 & 8 &4  &4  &4 &4\\
\hline 
8+2\epsilon & 8+2\epsilon & 8+2\epsilon &6-\epsilon &6-\epsilon &6-\epsilon &6-\epsilon\\
\hline 
6 &6  &6  & 8 & 8 & 8 & 8 \end{array}\right ]
\end{equation*}
Observe that $OPT=24/7$ by the set of odd actions $\{1,3,5\}$, while $APX=(14+9\epsilon)/7$ from the entire set of actions $\Omega=\{1,2,3,4,5\}$.
\end{example}

\subsection{Proof of Theorem~\ref{thm:APX_hardness}}
\label{app:hardness}

\begin{proof}

We give a reduction from the bounded degree vertex cover problem, i.e., the vertex cover problem on graphs with degree at most $B$ (constant). This problem is known to be APX-hard \cite{clementi1999improved}. Consider an instance of the bounded degree vertex cover problem $\Gt=(\Vt,\Et)$ with $\nt$ nodes and $\mt$ edges (where $\mt\leq B\cdot\nt/2=\O(\nt)$).\footnote{To distinguish between the parameters of the vertex cover instance and the delegation instance, we use tilde ($\sim$) for the graph instance.}  

We construct an instance of the delegation problem with
$\nt+1$ actions with action $a_i$ corresponding to node $i$ and an additional ``default'' action $a_0$. All actions have $0$ bias apart from $a_0$ which has bias $-1$. The correlated distribution of the actions values is defined as follows: we pick an edge $e=\{i,j\}$ or some node $i$ uniformly at random, i.e., each element with probability $(\mt +\nt)^{-1}$

If we picked some edge $e=\{i,j\}$  then  we assign value $5$ to actions $a_i$ and $a_j$, $2$ to the default action $a_0$, and $0$ for all  other actions. If we picked a node $i$ we assign value $2$ to $a_i$ and $a_0$ (default action) and $0$ for all other actions.

We claim that the optimal solution of the delegation problem produces a utility of $(5\mt+3\nt-\kt)/(\mt+\nt)$ for the principal, where $\kt$ is the size of the smallest vertex cover of $\Gt$. To see this, first note that any solution $\St \subseteq \Vt$ can be improved by including $a_0$, since $a_0$ has {a negative bias}. Any time the agent would choose $a_0$, it is the optimal choice for the principal as well. We therefore only consider solutions containing $a_0$.

Now if $\St$ is a vertex cover of $\Gt$ with $\abs{\St}=\kt$, consider the corresponding delegation set where the principal allows actions $\{a_i: i\in \St\}\cup\{a_0\}$. If we generate the values by picking an edge, the agent will pick the action corresponding to one end of that edge (one is guaranteed to be in the cover $\St$) to get a utility of $5$ compared to $2-1$ achievable from the default action. This choice will also generate utility of $5$ for the principal, which makes $5\mt$ in total. If the utility is generated by picking node $i$ the agent will pick action $a_i$ which generates the utility of $2$ for both principal and agent. This will make $2\kt$ in total. Finally, if the utilities are generated using some node $i \in \Vt\backslash\St$ the agent picks the default action which generates a utility of $2-1$ for the agent but $2+1$ for the principal. This will give $3(\nt-\kt)$ in total. As a result the principal utility in expectation is $(5\mt+3\nt-\kt)/(\mt+\nt)$.

For the converse, consider an optimal solution $A$ to the delegation problem. We show that the nodes corresponding to the actions in $A$ (excluding the default action) induce a vertex cover; otherwise the solution can be improved. Assume that there exists an edge $e=\{i,j\}$ where neither $a_i$ nor $a_j$ is allowed in $A$. If we add action $a_i$ to $A$, the principal gets a utility of $5$ if the utilities are generated from pick edge $e$, compared to current utility of $3$ from the default action. On the other hand, the utility of the principal decreases from $3$ to $2$ if the values are generated by action $i$. So the total utility of $A\cup\{a_i\}$ is more than $A$ which contradicts the optimality of $A$. Therefore $A$ should be a vertex cover (plus default action). This in turn implies that the utility is at most $(5\mt+3\nt-\kt)/(\mt+\nt)$ where $\kt$ is the size of the minimum vertex cover.

Since $\mt=\Theta(\nt)$ and the minimum vertex cover has size at least $\mt/B=\Omega(\nt)$, a constant factor gap in the bounded degree vertex cover problem translates into a constant factor gap in the optimal solution of the delegation problem, which yields the desired hardness result. 
\end{proof}

\end{document}